\documentclass[11pt]{article}
\usepackage[colorlinks=true,linkcolor=blue!60!black,citecolor=blue!75!black]{hyperref}%
\usepackage{booktabs}
\usepackage{tabularx}
\usepackage{array}

\usepackage{amsmath,bm,amssymb,amsthm}
\usepackage{algorithm,algpseudocode}

\algnewcommand{\ParFor}{\textbf{parfor}}
\algnewcommand{\EndParFor}{\textbf{end parfor}}

\newtheorem{corollary}{Corollary}
\newtheorem{proposition}{Proposition}

    \makeatletter
    \renewcommand\section{\@startsection {section}{1}{\z@}%
                                       {-3.5ex \@plus -1ex \@minus -.2ex}%
                                       {2.3ex \@plus.2ex}%
                                       {\normalfont\fontfamily{phv}\fontsize{14}{17}\bfseries}}
    \renewcommand\subsection{\@startsection{subsection}{2}{\z@}%
                                         {-3.25ex\@plus -1ex \@minus -.2ex}%
                                         {1.5ex \@plus .2ex}%
                                         {\normalfont\fontfamily{phv}\fontsize{12}{15}\bfseries}}
    \renewcommand\subsubsection{\@startsection{subsubsection}{3}{\z@}%
                                        {-3.25ex\@plus -1ex \@minus -.2ex}%
                                         {1.5ex \@plus .2ex}%
                                         {\normalfont\normalsize\fontfamily{phv}\fontsize{11}{14}\selectfont}}
    \makeatother
	
	\usepackage{graphicx}
	\usepackage{enumerate}
	\usepackage{xcolor}
	\usepackage{natbib} 
	\usepackage{url} 
	\usepackage{booktabs}
    \usepackage{caption}
    \usepackage{multirow}
    \usepackage{subfig}
    \usepackage{array}

\newcommand{\AlgComment}[1]{\hfill{\scriptsize\itshape \# #1}}

\title{
Post-Corrected Raw-Score Martingale Posterior Sampling for von Mises--Fisher Models
}

\author{Yi Xu, Xinye Chen, Sheng Jiang
\footnote{Email for correspondence:
jiangsheng@cuhk.edu.cn}
\\
School of Data Science\\ 
The Chinese University of Hong Kong, Shenzhen}
\date{\today}

\newtheorem{assumption}{Assumption}
\newtheorem{lemma}{Lemma}
\newtheorem{theorem}{Theorem}

\begin{document}

\maketitle
\abstract{
We develop a finite-horizon calibration method for raw-score martingale
posteriors, with von Mises--Fisher models as the main worked example. Starting
from the maximum likelihood estimator, predictive paths are generated by
simulating future observations from the current fitted model and updating the
natural parameter by unpreconditioned score increments. 
The main methodological step is to separate predictive simulation from
covariance calibration. Raw-score increments have Fisher-information covariance,
whereas Bernstein--von Mises calibration requires inverse-information covariance.
We therefore apply a terminal linear correction based on a local information
estimate. 
For more efficient implementation, 
we also introduce a hybrid version that replaces the omitted tail of the infinite predictive continuation by a Gaussian
approximation with matching leading-order quadratic variation. 
We prove fixed-\(n\) convergence, a finite-horizon approximation bound for the Gaussian tail, and a Bernstein--von Mises limit for the hybrid post-corrected
sampler under local regularity and consistent terminal calibration. 
Simulations show that tail correction reduces truncation-induced underdispersion, 
and an OSCAR ocean-current example illustrates local directional uncertainty summaries.
} 

\paragraph{Key words and phrases.}
von Mises–Fisher distribution,
Directional data,
martingale posterior,
predictive resampling,
prior-free inference,
score function,
covariance calibration, 
Bernstein--von Mises theorem.

\newpage 

\section{Introduction} 
\label{sec:introduction} 
Rotational and directional observations arise when the inferential target is an
orientation, heading, tilt, phase, or angular relationship. In two dimensions,
rotations in \(SO(2)\) are naturally identified with points on the unit circle
\(S^1\). In three dimensions, full rotations lie on the non-Euclidean manifold
\(SO(3)\), while rotation axes, directions, and oriented unit vectors lie on the
unit sphere \(S^2\). Examples include robot headings, wind directions, protein
backbone dihedral angles, neuronal phase shifts, crystallographic textures,
fault-plane focal mechanisms, spacecraft attitude, and joint kinematics. In this
work, we model planar rotations as circular data on \(S^1\); for
three-dimensional rotations, we focus on the associated rotation axis represented
as a unit vector on \(S^2\).

A canonical parametric model for directional observations on the unit sphere is
the von Mises--Fisher (vMF) family. For \(x\in S^{p-1}\), the \(p-1\)-dimensional unit
sphere in $\mathbb{R}^p$, the density with respect to surface measure is
\begin{equation}
f(x;\mu,\kappa)
=
c_p(\kappa)\exp\{\kappa \mu^\top x\},
\qquad
\mu \in S^{p-1},
\quad
\kappa \ge 0,
\nonumber 
\end{equation}
where
$ 
c_p(\kappa)
=
\frac{\kappa^{p/2-1}}
{(2\pi)^{p/2} I_{p/2-1}(\kappa)}
$
and \(I_\nu(\cdot)\) denotes the modified Bessel function of the first kind.
The parameter \(\mu\) specifies the mean direction, while \(\kappa\) controls
concentration about \(\mu\). The case \(\kappa=0\) corresponds to the uniform
distribution on the sphere, and larger values of \(\kappa\) correspond to
greater concentration around \(\mu\). The circular von Mises model and the
spherical von Mises--Fisher model are standard exponential-family baselines for
directional data inference.

Bayesian inference for vMF models has been developed along several lines, most of which rely on Markov chain Monte Carlo (MCMC) for posterior computation.
Early work focused primarily on the mean direction under a
known concentration parameter \(\kappa\) \citep{mardia1976bayesian}. 
Later work treated both direction and concentration, using canonical parameterizations
\citep{guttorp1988finding}, Gibbs sampling for circular data
\citep{damien1999full}, and sampling-importance-resampling for higher-dimensional
spherical data \citep{nunez2005bayesian}. 
Extensions to multimodal directional data include infinite vMF mixtures based on Dirichlet processes \citep{bangert2010using,straub2015dirichlet} and finite von Mises mixtures with an unknown number of components using trans-dimensional MCMC \citep{mulder2020bayesian}. 
\citet{forbes2015fast} address the computational bottleneck of MCMC by developing an efficient acceptance-rejection algorithm for sampling the concentration parameter in the circular von Mises model.


Martingale posterior distributions offer a prior-free alternative to uncertainty quantification, in the specific sense that no explicit prior distribution on the model parameter is specified: posterior draws are generated by simulating future predictive sequences and recording the limiting values of the inferential state, which evolves as a martingale.
This predictive viewpoint is closely related to earlier work on predictive constructions of Bayesian procedures and exchangeability \citep{fortini2012predictive,
fortini2025exchangeability,berti2025probabilistic}. The martingale posterior
formulation of \citet{fong2023martingale} has since been developed in several
directions, including asymptotic theory of parametric martingale posteriors \citep{fong2026asym},
quantile estimation and regression \citep{fong2025bayesian}, time-series models
\citep{moya2025martingale}, finite mixture models
\citep{rodriguez2025martingale}, 
and score-function-driven martingale posterior samplers 
\citep{cui2025martingale}.

For a regular parametric model \(p(x\mid\theta)\), let
$
s(x,\theta)=\nabla_\theta \log p(x\mid\theta)
$
denote the score function and let \(I(\theta)\) denote the Fisher information.
A calibrated parametric martingale posterior can be constructed by simulating
\(\widetilde X_{n+m}\sim p(\cdot\mid \theta_{n,m-1})\) and updating
\[
\theta_{n,m}
=
\theta_{n,m-1}
+
\gamma_{n,m}
I(\theta_{n,m-1})^{-1}
s(\widetilde X_{n+m},\theta_{n,m-1}),
\quad 
m=1,2,\ldots .
\]
The score has conditional mean zero, so the recursion is a martingale under the
predictive law. The Fisher preconditioner places the increments on the
inverse-information scale, which is the covariance scale appearing in
Bernstein--von Mises theory. This preconditioned recursion is therefore a useful
reference construction, but it requires Fisher-information evaluation and matrix
inversion along the predictive path.

\citet{cui2025martingale} observe that the mean-zero property of score functions is the minimal requirement for constructing martingales, and they study the basic convergence properties of score-driven predictive martingales. 
In this paper, we employ the same martingale identity but focus on two additional issues that arise in raw-score martingale posterior sampling (MPS): 
covariance scaling and finite-horizon truncation. 
In our application to vMF models, we adopt the natural parameter
$ 
c=\kappa\mu\in\mathbb R^p
$
and consider the raw-score recursion
\[
c_{n,m}
=
c_{n,m-1}
+
\gamma_{n,m}
s(\widetilde X_{n+m},c_{n,m-1}).
\]
This recursion remains a martingale because 
\(\mathbb{E}_{c}\{s(X,c)\}=0\). 
Its covariance, however, is not automatically on the Bernstein--von Mises scale: 
the conditional increment covariance is \(I(c_{n,m-1})\), whereas the target posterior covariance involves \(I(c^*)^{-1}\). 
Thus, for a raw-score martingale, martingale validity and posterior covariance calibration are separate questions.

First, forward simulation from vMF distributions can be carried out using existing methods such as those described by \citet{hoff2009simulation}. 
Hence the predictive path can be generated using
model simulation and score evaluation, while the inverse-information correction
is deferred to the terminal stage. 
In the basic vMF model the Fisher information is explicit, so this separation is not needed for feasibility; it is
nevertheless useful for understanding score-based martingale posteriors and for
settings where pathwise Fisher preconditioning is less convenient.

A second issue is finite simulation. 
The martingale posterior is defined through an infinite predictive continuation, but any implementation must truncate the path. 
If the raw-score recursion is stopped after \(M\) steps, the remaining
terms with tail weight
$
r_{n,M}
=
\sum_{m=M+1}^{\infty}(n+m)^{-2}
$ 
are not included. 
This can lead to variance loss when \(M\) is small relative to
\(n\). 
We therefore consider a hybrid implementation that simulates the first
\(M\) raw-score steps explicitly and replaces the ``missing" tail by a Gaussian
approximation, matching leading-order covariance. 
The resulting draw is then post-corrected using a terminal estimate of the local Fisher information.

The theoretical analysis is organized around these two calibration issues. First,
for fixed \(n\), the raw-score recursion is shown to converge almost surely under
standard martingale square-summability conditions. Second, for a finite
simulation horizon, the error from replacing the unresolved martingale tail by a
Gaussian approximation is bounded for smooth test functions under local
regularity conditions. Third, in the large-sample regime, the hybrid
post-corrected sampler is shown to satisfy a Bernstein--von Mises theorem when
the terminal calibration matrix consistently estimates the local Fisher
information. The consistency of this calibration matrix is an essential
condition. In particular, a single pathwise quadratic-variation estimate is not
generally consistent when \(M\) is fixed. Fixed explicit simulation depth is
therefore compatible with first-order Bernstein--von Mises calibration only when
the terminal information estimator is made consistent, for example by pooling
an independent calibration ensemble, or
using an analytic plug-in estimator.

The numerical experiments are used to illustrate these points. In circular von
Mises simulations, we compare the proposed martingale posterior samplers with a
Gibbs benchmark under the stated neutral limiting prior and examine coverage, interval length, and the effect of truncating the predictive path. 
The results illustrate the variance loss that can occur without a tail correction and the extent to which the hybrid approximation reduces this effect in the settings considered.

We also apply the method to satellite-derived OSCAR ocean surface current directions in a localized region of the California Current System. This real-data analysis should be viewed as an illustrative application rather than a complete model for the ocean-current field. In particular, the single-vMF working model is unimodal and therefore cannot capture important features of the data, such as multimodality, spatial dependence, or temporal variation. Nevertheless, it provides a useful first-order summary of local directional concentration and mean orientation, and allows a simple comparison across two seasonal snapshots. More realistic mixture models and spatial or spatio-temporal extensions of vMF models for this dataset are left for future work.

To facilitate reproducibility and reuse, we provide an open-source R
implementation of the proposed samplers in the package \texttt{bayesdir},
available at \url{https://github.com/sj156/bayesdir}. 
The remainder of the article is organized as follows.
Section~\ref{sec:methodology} reviews the score-based predictive resampling
framework and defines the martingale posterior samplers for the vMF model.
Section~\ref{sec:theory} gives the convergence result, the hybrid approximation
bound, and the Bernstein--von Mises theorem.
Section~\ref{sec:simulation} reports simulation experiments.
Section~\ref{sec:application} presents the ocean-current application.
Section~\ref{sec:conclusion} concludes with a discussion of extensions and
limitations. Technical proofs are collected in the Appendix.

\section{Methodology}  
\label{sec:methodology}

\subsection{Predictive Resampling and Martingale Posterior Distributions}
\label{sec:predictive_resampling}

The martingale posterior is a predictive, prior-free uncertainty distribution, where ``prior-free'' means that no explicit prior distribution on the model parameter is specified. Given the observed data \(X_{1:n}\), one specifies a sequence of one-step-ahead predictive distributions for future observations,
\[
    \widetilde X_{n+m}\mid \mathcal F_{n,m-1}
    \sim
    P_{n,m}(\cdot\mid \mathcal F_{n,m-1}),
    \qquad m\ge1,
\]
where
$
    \mathcal F_{n,m}
    =
    \sigma\{X_{1:n},\widetilde X_{n+1},\ldots,\widetilde X_{n+m}\}.
$
Let \(T_{n,m}\) be a data-dependent inferential state, such as a parameter estimate updated after the first \(m\) predictive draws. If \( \mathbb E(T_{n,m}\mid \mathcal F_{n,m-1})=T_{n,m-1} \) for \(m\ge1\), and \(T_{n,m}\to T_{n,\infty}\) almost surely, then the conditional law
\[
    \Pi_n^{\mathrm{MP}}(\cdot)
    =
    \mathcal L(T_{n,\infty}\in\cdot\mid X_{1:n})
\]
is called the martingale posterior. A posterior draw is obtained by simulating one complete future predictive path and recording the limiting inferential state \(T_{n,\infty}\). Thus uncertainty is represented through the variability of the unobserved future sequence rather than through a prior distribution on the parameter.

Predictive constructions of priors and posteriors have a rich lineage. \citet{fortini2012predictive,fortini2020quasi,fortini2025exchangeability} developed predictive constructions and their relationship with exchangeability and sequential learning, while \citet{berti2021class,berti2025probabilistic} subsequently formalized predictive models for Bayesian learning. The specific term \textit{martingale posterior} was introduced by \citet{fong2023martingale}, who rigorously defined posterior uncertainty as the limiting distribution generated by predictive resampling. This flexible framework has since been extended to quantile regression \citep{fong2025bayesian}, time-series modeling \citep{moya2025martingale}, finite mixture models \citep{rodriguez2025martingale}, and parametric martingale posteriors \citep{fong2026asym}. While conceptually related to the Bayesian bootstrap \citep{rubin1981bayesian}, the parametric bootstrap \citep{efron2012bayesian}, and generalized Bayesian updating \citep{bissiri2016general,knoblauch2022optimization}, the martingale posterior is distinct: 
uncertainty 
propagates through the predictive resampling which forms a martingale. 

For a parametric model $p(x \mid \theta)$, let $s(x, \theta) = \nabla_\theta \log p(x \mid \theta)$ denote the score function and $I(\theta)$ the Fisher information matrix. The calibrated parametric construction studied by \citet{fong2026asym} initializes at an estimator $\theta_{n,0}$, such as the maximum likelihood estimator (MLE), and iterates:
\begin{align}
    &\widetilde{X}_{n+m} \sim p(\cdot \mid \theta_{n,m-1}), \nonumber \\
    &\theta_{n,m} = \theta_{n,m-1} + \gamma_m I(\theta_{n,m-1})^{-1} s(\widetilde{X}_{n+m}, \theta_{n,m-1}), \quad m \ge 1, \label{eq:preconditioned_update}
\end{align}
where the deterministic step sizes satisfy $\sum_{m \ge 1} \gamma_m = \infty$ and $\sum_{m \ge 1} \gamma_m^2 < \infty$ \citep{lai2003stochastic}. 
Martingale posterior samples are obtained by independently repeating the predictive resampling procedure in~\eqref{eq:preconditioned_update}.

In \eqref{eq:preconditioned_update}, the Fisher information preconditioner maps score fluctuations onto the inverse-information scale. Under standard regularity conditions, this gives the covariance scale appearing in the classical Bernstein--von Mises limit. 
We refer to this construction as the Fisher-preconditioned (FP) martingale posterior sampler (MPS), or MPS-FP. 
A hybrid version of MPS-FP can also be constructed by replacing the unresolved preconditioned tail with a Gaussian approximation. However, both MPS-FP and its hybrid counterpart require explicit evaluation of the Fisher information, and MPS-FP further requires matrix inversion along the entire predictive path.

\subsection{Raw Score Martingale Posterior Samplers}
\label{sec:vmf_samplers}
We consider a raw-score construction in which Fisher preconditioning is not applied during predictive simulation. The recursion is
\begin{equation}
    \widetilde X_{n+m}\sim p(\cdot\mid\theta_{n,m-1}),
    \qquad
    \theta_{n,m}
    =
    \theta_{n,m-1}
    +
    \gamma_m\, s(\widetilde X_{n+m},\theta_{n,m-1}),
    \qquad m\ge1,
    \label{eq:raw_score_update_general}
\end{equation}
where the step sizes are $\gamma_m=(n+m)^{-1}$. Here and throughout, we suppress the dependence of $\gamma_m$ on the fixed initial sample size $n$. The choice satisfies $\sum_{m\ge1}\gamma_m=\infty$ and $w_{n,\infty}\equiv\sum_{m\ge1}\gamma_m^2=n^{-1}+O(n^{-2})$. Since the score has conditional mean zero under the model, the sequence $(\theta_{n,m})_{m\ge0}$ is a martingale. Whenever this recursion converges, its limit defines a martingale posterior draw $\theta_{n,\infty}$; Theorem~\ref{thm:as_convergence} gives sufficient conditions. 

The preconditioner in \eqref{eq:preconditioned_update} governs the covariance scale of the path. By the information identity $\mathrm{Var}_\theta\{s(X,\theta)\}=I(\theta)$, each increment in \eqref{eq:raw_score_update_general} has conditional covariance $\gamma_m^2 I(\theta_{n,m-1})$. The square-summable step sizes keep the path within $O_p(n^{-1/2})$ of its initialization, so summing the increments gives
\begin{equation}
    \operatorname{Var}\bigl(\theta_{n,\infty}-\theta_{n,0}\mid X_{1:n}\bigr)
    \approx
    w_{n,\infty}\, I(\theta_{n,0}),
    \label{eq:raw_scale}
\end{equation}
whereas the Bernstein--von Mises limit has covariance $n^{-1}I(\theta_{n,0})^{-1}$. The raw-score path therefore fluctuates on the information scale, and a separate calibration step is needed to reach the inverse-information scale.
\paragraph{Terminal covariance correction}
The calibration is a single affine map applied at the end of the path. For a positive definite estimator $\widehat I$ of $I(\theta_{n,0})$, define the returned draw
\begin{equation}
    \mathcal C(u;\widehat I)
    =
    \theta_{n,0}
    +
    \widehat I^{-1}(u-\theta_{n,0}),
    \label{eq:correction_map}
\end{equation}
where $u$ is the terminal draw from a raw-score path.\footnote{If $\widehat I$ is nearly singular, $\widehat I^{-1}$ in \eqref{eq:correction_map} may be replaced by the regularized inverse $(\widehat I+\varepsilon I_p)^{-1}$ for a small $\varepsilon>0$. This standard stabilization device affects neither the methodology nor the theory and is suppressed from the notation. 
}
When $u-\theta_{n,0}$ has covariance approximately $w_{n,\infty}I(\theta_{n,0})$ as in \eqref{eq:raw_scale} and $\widehat I\approx I(\theta_{n,0})$,
\[
    \operatorname{Var}\bigl\{\mathcal C(u;\widehat I)-\theta_{n,0}\bigr\}
    \approx
    \widehat I^{-1}\bigl\{w_{n,\infty}I(\theta_{n,0})\bigr\}\widehat I^{-1}
    \approx
    \bigl\{n^{-1}+O(n^{-2})\bigr\}\, I(\theta_{n,0})^{-1},
\]
which is the first-order Bernstein--von Mises covariance. Conditional on a fixed calibration matrix, the map is affine and therefore preserves non-Gaussian features generated by the early predictive updates. With path-specific calibration, randomness in the calibration matrix can additionally affect the shape of the returned distribution.

We use the term \emph{post-corrected martingale posterior sampler} for this calibrated transformation of draws generated by raw-score martingale paths. When \(\widehat I\) is estimated from a completed path or from the same pooled ensemble used to form the output sample, the corrected draws need not themselves be terminal values of an adapted martingale.
The estimator $\widehat I$ can be built from the simulated paths. Suppose $B$ independent predictive paths (chains) are generated, and write $s_{n,m}^{(b)}$ for the $m$-th score increment of chain $b$. Define the realized quadratic variation
\begin{equation}
    Q_{n,M}^{(b)}
    =
    \sum\nolimits_{m=1}^M
    \gamma_m^2
    s_{n,m}^{(b)}
    \{s_{n,m}^{(b)}\}^{\top},
    \qquad
    w_{n,M}
    =
    \sum\nolimits_{m=1}^M\gamma_m^2 .
    \label{eq:quadratic_variation_method}
\end{equation}
Its conditional expectation equals $\sum_{m=1}^M\gamma_m^2 I(\theta_{n,m-1}^{(b)})\approx w_{n,M}\,I(\theta_{n,0})$, which motivates the chain-specific calibration estimator
\begin{equation}
    \widehat I_{n,M}^{(b),\mathrm{path}}
    =
    Q_{n,M}^{(b)}/w_{n,M}.
    \label{eq:pathwise_information_estimator}
\end{equation}
This pathwise estimator requires no communication across chains and preserves the fully parallel structure of the sampler. Alternatively, one may use the pooled estimator
\begin{equation}
    \widehat I_{n,M,B}^{\mathrm{pool}}
    =
    \frac1B
    \sum\nolimits_{b=1}^B
    \widehat I_{n,M}^{(b),\mathrm{path}},
    \label{eq:pooled_information_estimator_methodology}
\end{equation}
which introduces a single synchronization step after all paths have been simulated but reduces Monte Carlo noise in the estimated local Fisher geometry. It is especially useful when \(M\) is small or when individual \(Q_{n,M}^{(b)}\) matrices are unstable. Because the returned draws then share a random calibration matrix, they are exchangeable but not strictly independent when the same paths are used both for calibration and output. The Bernstein--von Mises theorem in Section~\ref{sec:bvm} directly covers a consistent data-measurable estimator or a pooled estimator obtained from an independent calibration ensemble; for fixed $M$, the latter attains consistency as its ensemble size grows. Analytic information or an independent calibration ensemble also avoids shared-calibration dependence among the returned draws.

\paragraph{Recommended Usage.} In implementation the recursion is stopped after $M$ steps, which discards the tail weight
$
r_{n,M}=\sum_{m=M+1}^{\infty}\gamma_m^2=w_{n,\infty}-w_{n,M}
$
from the total predictive variance in \eqref{eq:raw_scale}. We consider two samplers built on the same raw-score path, summarized in Table~\ref{tab:vmf_samplers}. The \textit{raw-score post-corrected sampler} (MPS-R) applies \eqref{eq:correction_map} to the terminal state $\theta_{n,M}^{(b)}$. Since $w_{n,M}\approx M/\{n(n+M)\}$, the omitted tail produces variance attenuation unless $M\gg n$, so direct use of MPS-R requires a large truncation horizon relative to \(n\). The \textit{hybrid raw-score post-corrected sampler} (Hybrid MPS-R) appends a Gaussian tail $Z_{n,M}^{(b)}\sim N_p(0,r_{n,M}\widehat I_{\mathrm{cal}}^{(b)})$ before the correction, which restores the omitted variance weight at any fixed horizon $M$. Section~\ref{sec:vmf_specifics} specializes both samplers to the vMF model and gives the full algorithm.
\begin{table}[H]
\centering
\small
\captionsetup{width=0.90\linewidth, font=small}
\caption{Comparison of the raw score martingale posterior samplers. For both samplers, \(\widehat I_{\mathrm{cal}}^{(b)}\) may be either the pathwise estimator \eqref{eq:pathwise_information_estimator} or the pooled estimator \eqref{eq:pooled_information_estimator_methodology}.}
\label{tab:vmf_samplers}
\begin{tabular}{lll}
\toprule
Sampler
& Tail after \(M\)
& Returned draw \\
\midrule
MPS-R
&
None
&
\(\mathcal C(\theta_{n,M}^{(b)};
\widehat I_{\mathrm{cal}}^{(b)})\)
\\[0.5em]
Hybrid MPS-R
&
\(Z_{n,M}^{(b)}\sim \mathcal N_p(0,r_{n,M}\widehat I_{\mathrm{cal}}^{(b)})\)
&
\(\mathcal C(\theta_{n,M}^{(b)}+Z_{n,M}^{(b)};\widehat I_{\mathrm{cal}}^{(b)})\)
\\
\bottomrule
\end{tabular}
\end{table}

\subsection{Raw-Score Post-Corrected MPS for vMF Models}
\label{sec:vmf_specifics}

We now specialize the raw-score post-corrected martingale posterior samplers to the von Mises--Fisher setting. Both variants are summarized in Algorithm~\ref{alg:vmf_mps} at the end of the section. 

We adopt the natural parameterization 
$
c \equiv \kappa\mu \in \mathbb{R}^p,  
$ 
where $\kappa \equiv  \|c\|$ and 
$\mu \equiv \frac{c}{\|c\|}$, 
for mathematical convenience,
because the predictive updates are performed in the unconstrained Euclidean space $\mathbb{R}^p$, 
while the observations remain correctly constrained to the sphere $\mathbb{S}^{p-1}$.  
Under the natural parameterization, the score function is 
\begin{equation}
    s(x, c) = x - A_p(\|c\|) \frac{c}{\|c\|}, \quad A_p(\kappa) = \frac{I_{p/2}(\kappa)}{I_{p/2-1}(\kappa)}, \label{eq:vmf_score}
\end{equation}
where $I_\nu$ is the modified Bessel function of the first kind. We adopt the convention that $A_p(\|c\|)c/\|c\| = 0$ when $c=0$. Equivalently, $s(x,c)=x-\mathbb{E}_c(X)$; the score is the observed direction minus the model-implied mean direction. This closed form allows efficient evaluations during the predictive resampling path.

The Fisher information in the natural parameter is $I(c)=\mathrm{Var}_c(X)$. For $c\neq0$, writing $\kappa=\|c\|$ and $u=c/\kappa$, it has the form
\begin{equation}
    I(c)
    =
    \frac{A_p(\kappa)}{\kappa}(I_p-uu^\top)
    +
    A_p'(\kappa)uu^\top,
    \qquad
    A_p'(\kappa)
    =
    1-A_p(\kappa)^2-\frac{p-1}{\kappa}A_p(\kappa).
    \label{eq:vmf_fisher}
\end{equation}
At $c=0$, $I(0)=p^{-1}I_p$. 
For the vMF model, \(I(c)\) is explicit and low dimensional, so MPS-FP and its hybrid variant can be implemented directly and are included as references. 
We use raw-score post-corrected variants because they do not require Fisher preconditioning at each simulated step. This distinction is relevant for directional models in which the Fisher information is not available in closed form or is more burdensome to compute.

\subsubsection*{MPS-R: Raw-score post-corrected sampler}
The raw-score post-corrected martingale posterior sampler, abbreviated MPS-R, simulates the unpreconditioned score recursion
\begin{equation}
    \widetilde X_{n+m}^{(b)}
    \sim
    \operatorname{vMF}(c_{n,m-1}^{(b)}),
    \qquad
    c_{n,m}^{(b)}
    =
    c_{n,m-1}^{(b)}
    +
    \gamma_m s_{n,m}^{(b)},
    \label{eq:rs_pc_update}
\end{equation}
where $s_{n,m}^{(b)}=s(\widetilde X_{n+m}^{(b)},c_{n,m-1}^{(b)})$. After $M$ steps, choose either
$\widehat I_{\mathrm{cal}}^{(b)}=\widehat I_{n,M}^{(b),\mathrm{path}}$ or
$\widehat I_{\mathrm{cal}}^{(b)}=\widehat I_{n,M,B}^{\mathrm{pool}}$. The returned draw is
\begin{equation}
    c_{\mathrm{out}}^{(b)}
    =
    \mathcal C
    (c_{n,M}^{(b)};\widehat I_{\mathrm{cal}}^{(b)}).
    \label{eq:rs_pc_output}
\end{equation}
Thus MPS-R separates the martingale simulation and the covariance calibration: the path is generated using raw-score increments, and the inverse-information geometry is imposed at the terminal stage.  

\subsubsection*{Hybrid MPS-R: hybrid raw-score post-corrected sampler}
The hybrid raw-score post-corrected martingale posterior sampler, abbreviated Hybrid MPS-R, uses the same raw-score path as MPS-R but adds a Gaussian proxy for the unsimulated tail. 
After $M$ raw-score steps, define
$
r_{n,M}=
\sum\nolimits_{m=M+1}^{\infty}\gamma_m^2 .
$
Given the calibration estimator $\widehat I_{\mathrm{cal}}^{(b)}$, draw an independent Gaussian tail
\begin{equation}
    Z_{n,M}^{(b)}
    \sim
    N_p(0,r_{n,M}\widehat I_{\mathrm{cal}}^{(b)}).
    \label{eq:hybrid_tail_methodology}
\end{equation}
The returned draw is
\begin{equation}
    c_{\mathrm{out}}^{(b)}
    =
    \mathcal C
    (c_{n,M}^{(b)}+Z_{n,M}^{(b)};
    \widehat I_{\mathrm{cal}}^{(b)}).
    \label{eq:hrs_pc_output}
\end{equation} 
The Gaussian tail is used to restore the deterministic variance weight omitted by truncation, 
thus Hybrid MPS-R can recover the first-order Bernstein--von Mises covariance scale without requiring $M\gg n$, provided that the terminal calibration estimator is accurate for the local Fisher information.

\providecommand{\AlgComment}[1]{\hfill{\scriptsize\itshape \# #1}}

\begin{algorithm}[H]
\caption{Post-Corrected Martingale Posterior Sampler for vMF Models}
\label{alg:vmf_mps}
\small
\begin{algorithmic}[1]
\Require Observations \(X_{1:n}\subset\mathbb S^{p-1}\), initial estimator \(c_{n,0}\), depth \(M\), chains \(B\), sampler type \(\mathsf{type}\in\{\mathrm{MPS\text{-}R},\mathrm{Hybrid \ MPS\text{-}R}\}\), and calibration choice \(\mathrm{cal}\in\{\mathrm{pathwise},\mathrm{pooled}\}\).
\Ensure Posterior draws \(\{c_{\mathrm{out}}^{(b)}\}_{b=1}^B\).

\State \textbf{Global setup} \AlgComment{deterministic, done once}
\State Compute \(\gamma_m=(n+m)^{-1}\), \(w_{n,M}=\sum_{m=1}^M\gamma_m^2\); if \(\mathsf{type}=\mathrm{Hybrid \ MPS\text{-}R}\), compute \(r_{n,M}=\sum_{m=M+1}^{\infty}\gamma_m^2\).

\State \textbf{Step 1: simulate raw score paths} \AlgComment{parallel over \(b=1,\ldots,B\)}
\For{\(b=1,\ldots,B\)}
    \State \(c_{n,0}^{(b)}\gets c_{n,0}\), \(Q_{n,0}^{(b)}\gets0\).
    \For{\(m=1,\ldots,M\)}
        \State Draw \(\widetilde X_{n+m}^{(b)}\sim\operatorname{vMF}(c_{n,m-1}^{(b)})\); compute \(s_{n,m}^{(b)}=s(\widetilde X_{n+m}^{(b)},c_{n,m-1}^{(b)})\).
        \State \(Q_{n,m}^{(b)}\gets Q_{n,m-1}^{(b)}+\gamma_m^2s_{n,m}^{(b)}\{s_{n,m}^{(b)}\}^{\top}\). \AlgComment{quadratic variation}
        \State \(c_{n,m}^{(b)}\gets c_{n,m-1}^{(b)}+\gamma_m s_{n,m}^{(b)}\). \AlgComment{raw score update}
    \EndFor
\EndFor

\State \textbf{Step 2: construct calibration estimator}
\If{\(\mathrm{cal}=\mathrm{pooled}\)}
    \State \(\widehat I_{\mathrm{pool}}\gets B^{-1}\sum_{b=1}^B Q_{n,M}^{(b)}/w_{n,M}\). \AlgComment{shared across chains}
\EndIf

\State \textbf{Step 3: return posterior draws}
\For{\(b=1,\ldots,B\)}
    \State \(\widehat I^{(b)}\gets Q_{n,M}^{(b)}/w_{n,M}\) if \(\mathrm{cal}=\mathrm{pathwise}\), otherwise \(\widehat I^{(b)}\gets\widehat I_{\mathrm{pool}}\).
    \If{\(\mathsf{type}=\mathrm{MPS\text{-}R}\)}
        \State \(c_{\mathrm{out}}^{(b)}\gets\mathcal C(c_{n,M}^{(b)};\widehat I^{(b)})\). \AlgComment{post-correct raw path}
    \ElsIf{\(\mathsf{type}=\mathrm{Hybrid \ MPS\text{-}R}\)}
        \State Draw \(Z_{n,M}^{(b)}\sim N_p(0,r_{n,M}\widehat I^{(b)})\). \AlgComment{Gaussian tail}
        \State \(c_{\mathrm{out}}^{(b)}\gets\mathcal C(c_{n,M}^{(b)}+Z_{n,M}^{(b)};\widehat I^{(b)})\). \AlgComment{hybrid correction}
    \EndIf
\EndFor
\end{algorithmic}
\end{algorithm}

The algorithms described above are implemented in R in the open-source package
\texttt{bayesdir}, available at
\url{https://github.com/sj156/bayesdir}. The package includes routines for
simulating vMF predictive paths, computing raw-score updates, applying terminal
post-correction, and generating MPS-R and Hybrid MPS-R posterior draws.

\section{Theoretical Guarantees for Raw-Score Martingale Calibration} 
\label{sec:theory} 

This section addresses three issues. First, for fixed observed sample size \(n\), the raw-score recursion has an almost sure infinite-depth limit. This establishes that the raw predictive construction defines a valid limiting martingale posterior draw. Second, for a finite simulation horizon \(M\), the unresolved martingale tail can be approximated by a Gaussian random vector with matching leading-order quadratic variation. We give a smooth-test-function error bound for this hybrid approximation. Third, in the large-sample regime \(n\to\infty\), the Hybrid MPS-R satisfies a Bernstein--von Mises limit after centering at the initial MLE \(\theta_{n,0}\).

Although the proposed methods are developed for von Mises--Fisher inference, the main probabilistic arguments are not specific to directional data. We therefore state the results for a regular parametric model \(\{p(x\mid \theta):\theta\in\Theta\subseteq\mathbb R^p\}\). For the vMF model, the natural parameter \(c=\kappa\mu\in\mathbb R^p\) plays the role of \(\theta\), and the required regularity conditions hold locally on compact subsets of the natural-parameter space.

For completeness and ease of reading, we begin by briefly recalling the core mathematical notation used in our analysis. Although some of these definitions have been introduced earlier, consolidating them here ensures this section remains relatively self-contained and provides a convenient reference for the subsequent theoretical results.

Let \(X_{1:n}\) denote the observed data, and let \(\theta_{n,0}\) be the initial estimator, taken throughout to be the MLE. 
Conditional on \(X_{1:n}\), the raw-score predictive recursion generates, for \(m\ge 1\),
\begin{equation*}
    \widetilde X_{n+m}\mid \mathcal F_{n,m-1}
    \sim p(\cdot\mid \theta_{n,m-1}),
    \qquad
    \theta_{n,m}
    =
    \theta_{n,m-1}
    +
    \gamma_{m}s_{n,m}.
\end{equation*}
where \(\gamma_{m}=(n+m)^{-1}\), \(s_{n,m}=\nabla_\theta\log p(\widetilde X_{n+m}\mid \theta_{n,m-1})\), and \(\mathcal F_{n,m}=\sigma\{X_{1:n},\widetilde X_{n+1},\ldots,\widetilde X_{n+m}\}\).
For a finite horizon \(M\ge 1\), define
$
    D_{n,M}
    =
    \sum\nolimits_{m=1}^M \gamma_{m}s_{n,m},
$
$
    Q_{n,M}
    =
    \sum\nolimits_{m=1}^M \gamma_{m}^2s_{n,m}s_{n,m}^\top,
$
and 
$
    w_{n,M}
    =
    \sum\nolimits_{m=1}^M \gamma_{m}^2
$. 
Thus \(\theta_{n,M}=\theta_{n,0}+D_{n,M}\). 
The infinite-horizon deterministic weight is
$
w_{n,\infty}
\equiv 
\sum\nolimits_{m\ge 1}\gamma_{m}^2
=
\sum\nolimits_{m\ge 1}(n+m)^{-2}
=
n^{-1}+\mathcal O(n^{-2}).
$
The unresolved tail weight after \(M\) steps is \(r_{n,M}\equiv w_{n,\infty}-w_{n,M}=\sum\nolimits_{m>M}\gamma_{m}^2\). We also write \(A_{n,M}\equiv \sum\nolimits_{m>M}\gamma_{m}^3\), which controls the third-moment term in the Gaussian approximation of the martingale tail.

The pathwise quadratic-variation estimator of the local Fisher scale is \(\widehat I^{\mathrm{path}}_{n,M}=Q_{n,M}/w_{n,M}\). For large-sample calibration, however, we allow a generic positive definite estimator \(\widehat I_n\) of the local Fisher information. One important example is the pooled estimator based on \(B\) independent predictive paths 
\begin{equation}
\label{eq:pooled_information_estimator_theory}
    \widehat I_{n,M,B}^{\mathrm{pool}}
    =
    \frac{1}{B}
    \sum\nolimits_{b=1}^B
    \frac{Q_{n,M}^{(b)}}{w_{n,M}}.
\end{equation}
This distinction matters because a single pathwise estimator \(Q_{n,M}/w_{n,M}\) need not consistently estimate \(I(\theta_{n,0})\) when \(M\) is fixed, whereas the pooled estimator can be made consistent by increasing \(B\).

The nugget used in Algorithm \ref{alg:vmf_mps} is a numerical stabilization device and is asymptotically negligible when the nugget tends to zero or when the estimated information matrix is well conditioned. 
Given a positive definite information estimate \(\widehat I\), the post-corrected raw-score draw is 
$    \theta_{n,M}^{\mathrm{pc}}(\widehat I)
    =
    \theta_{n,0}
    +
    \widehat I^{-1}D_{n,M}.
$

\subsection{Convergence of the Raw-Score Recursion}
\label{sec:fixed_n_convergence}

We first fix \(n\) and study the limit as the predictive simulation depth \(M\to\infty\). In this regime, the observed data are held fixed, and all remaining randomness comes from the predictive resampling mechanism. Since \(\sum\nolimits_{m\ge1}\gamma_{m}^2<\infty\), the raw-score recursion is a square-summable martingale recursion. The following assumption gives a convenient fixed-\(n\) convergence condition.

\begin{assumption}[Fixed-\(n\) martingale conditions]
\label{assum:martingale_convergence}
For fixed \(n\), conditional on \(X_{1:n}\), assume:
\begin{enumerate}[(i)]
    \item The predictive scores are martingale differences: \(\mathbb E(s_{n,m}\mid \mathcal F_{n,m-1})=0\) for all \(m\ge 1\).
    \item The accumulated second moment is conditionally integrable:
    \[
        \mathbb E\!\left[
            \sum\nolimits_{m\ge1}\gamma_m^2\|s_{n,m}\|^2
            \,\middle|\, X_{1:n}
        \right]<\infty
        \quad\text{a.s.}
    \]
    \item The realized quadratic variation satisfies \(Q_{n,M}\to Q_{n,\infty}\) a.s., where \(Q_{n,\infty}\) is finite and positive definite.
\end{enumerate}
\end{assumption}

\begin{theorem}[Fixed-\(n\) convergence of the raw-score martingale]
\label{thm:as_convergence}
Suppose Assumption \ref{assum:martingale_convergence} holds. Then, for fixed \(n\), \(D_{n,M}=\sum\nolimits_{m=1}^M\gamma_{m}s_{n,m}\) converges almost surely and in \(L^2\) under the conditional predictive law to a finite random variable \(D_{n,\infty}\). Consequently,
\[
    \theta_{n,M}
    \equiv 
    \theta_{n,0}+D_{n,M}
    \to
    \theta_{n,\infty}
    \equiv 
    \theta_{n,0}+D_{n,\infty}
    \qquad \text{a.s.}
\]
Moreover, \(\widehat I^{\mathrm{path}}_{n,M}=Q_{n,M}/w_{n,M} 
\to \widehat I^{\mathrm{path}}_{n,\infty}\equiv Q_{n,\infty}/w_{n,\infty}\) a.s. 

If \(Q_{n,\infty}\) is positive definite, 
then the pathwise post-corrected draw \(\theta_{n,M}^{\mathrm{pc}}=
\theta_{n,0}+(\widehat I^{\mathrm{path}}_{n,M})^{-1}D_{n,M}\) converges almost surely to
$
\theta_{n,\infty}^{\mathrm{pc}}
=
\theta_{n,0}
+
\left(\widehat I^{\mathrm{path}}_{n,\infty}\right)^{-1}D_{n,\infty}.
$
\end{theorem}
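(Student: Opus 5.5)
The plan is to work conditionally on \(X_{1:n}\) throughout, so that all probabilities and expectations are under the predictive law with \(n\) fixed, and to treat \(D_{n,M}=\sum_{m=1}^M\gamma_m s_{n,m}\) as a vector martingale with respect to \((\mathcal F_{n,M})_{M\ge0}\). By Assumption~\ref{assum:martingale_convergence}(i), \(\mathbb E(\gamma_m s_{n,m}\mid\mathcal F_{n,m-1})=0\), so the increments are martingale differences.

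The first step is \(L^2\)-boundedness. Orthogonality of martingale increments gives
\[
\mathbb E\bigl(\|D_{n,M}\|^2\mid X_{1:n}\bigr)=\sum\nolimits_{m=1}^M\gamma_m^2\,\mathbb E\bigl(\|s_{n,m}\|^2\mid X_{1:n}\bigr)\le \mathbb E\Bigl[\sum\nolimits_{m\ge1}\gamma_m^2\|s_{n,m}\|^2\,\Big|\,X_{1:n}\Bigr],
\]
where the last inequality uses monotone convergence to interchange sum and expectation. By Assumption~\ref{assum:martingale_convergence}(ii) the right-hand side is finite, so \(\sup_M\mathbb E\|D_{n,M}\|^2<\infty\). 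Applying the \(L^2\) martingale convergence theorem coordinatewise to each of the \(p\) components yields a finite limit \(D_{n,\infty}\) with \(D_{n,M}\to D_{n,\infty}\) almost surely and in \(L^2\). The Cauchy property in \(L^2\) can also be read directly from the tail sums \(\sum_{m>M}\gamma_m^2\mathbb E\|s_{n,m}\|^2\to0\). Since \(\theta_{n,0}\) is \(X_{1:n}\)-measurable, the convergence \(\theta_{n,M}=\theta_{n,0}+D_{n,M}\to\theta_{n,\infty}\) follows immediately.

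The second step handles the information estimate. Because \(\gamma_m=(n+m)^{-1}\), the deterministic weights satisfy \(w_{n,M}\uparrow w_{n,\infty}\in(0,\infty)\). Combined with Assumption~\ref{assum:martingale_convergence}(iii), this gives \(Q_{n,M}/w_{n,M}\to Q_{n,\infty}/w_{n,\infty}\) almost surely. Incidentally, (iii) is almost implied by (ii): \(Q_{n,M}\) is monotone in the Loewner order, and its trace has finite expectation, hence is a.s. finite, so the limit exists. Positive definiteness of the limit, however, is a genuine assumption.

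The final step is the continuous mapping argument for the post-corrected draw. On the probability-one event where both \(D_{n,M}\to D_{n,\infty}\) and \(\widehat I^{\mathrm{path}}_{n,M}\to\widehat I^{\mathrm{path}}_{n,\infty}\) hold, positive definiteness of the limit implies two things for all large \(M\):
\begin{itemize}
\item \(\widehat I^{\mathrm{path}}_{n,M}\) is invertible, since eigenvalues are continuous in the matrix and the smallest eigenvalue of the limit is positive;
\item \((\widehat I^{\mathrm{path}}_{n,M})^{-1}\to(\widehat I^{\mathrm{path}}_{n,\infty})^{-1}\), since matrix inversion is continuous on the positive definite cone.
\end{itemize}
The map \((A,d)\mapsto\theta_{n,0}+A^{-1}d\) is jointly continuous there, which gives \(\theta^{\mathrm{pc}}_{n,M}\to\theta^{\mathrm{pc}}_{n,\infty}\) a.s.

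The main obstacle is only bookkeeping. For small \(M\), \(Q_{n,M}\) may be singular; for instance, \(Q_{n,1}\) has rank one. The draw \(\theta^{\mathrm{pc}}_{n,M}\) should therefore be understood as defined for all \(M\) beyond a random finite index, or with the regularized inverse from the footnote. This does not affect the almost sure limit, because the limit depends only on large \(M\).
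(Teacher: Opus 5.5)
Your proposal is correct and follows essentially the same route as the paper. Conditional martingale orthogonality plus Assumption~\ref{assum:martingale_convergence}(ii) gives \(L^2\) control and hence almost sure and \(L^2\) convergence of \(D_{n,M}\); then (iii), together with \(w_{n,M}\to w_{n,\infty}\in(0,\infty)\) and continuity of matrix inversion at the positive definite limit, yields convergence of the post-corrected draw. Your extra remarks are accurate refinements that the paper's proof leaves implicit: existence of \(Q_{n,\infty}\) already follows from (ii) via Loewner monotonicity, and \(\theta^{\mathrm{pc}}_{n,M}\) is defined only beyond a random finite index (for example, \(Q_{n,1}\) has rank one).
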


\begin{proof}[proof sketch]
The formal proof is deferred to Appendix \ref{app:proof_thm1}. 
    Let \(Y_{n,m}=\gamma_{m}s_{n,m}\). By Assumption 1(i),
\((D_{n,M},\mathcal F_{n,M})\), with \(D_{n,M}=\sum_{m=1}^M Y_{n,m}\), is a
martingale. Assumption 1(ii) gives conditional \(L^2\)-boundedness, so the
martingale convergence theorem yields \(D_{n,M}\to D_{n,\infty}\) almost surely
and in conditional \(L^2\). Hence \(\theta_{n,M}=\theta_{n,0}+D_{n,M}\to
\theta_{n,\infty}:=\theta_{n,0}+D_{n,\infty}\) a.s.

By Assumption 1(iii), \(Q_{n,M}\to Q_{n,\infty}\) a.s., while
\(w_{n,M}\to w_{n,\infty}\in(0,\infty)\). Therefore
\[
\widehat I^{\mathrm{path}}_{n,M}
\equiv 
\frac{Q_{n,M}}{w_{n,M}}
\to
\widehat I^{\mathrm{path}}_{n,\infty}
\equiv 
\frac{Q_{n,\infty}}{w_{n,\infty}}
\qquad \text{a.s.}
\]
If \(Q_{n,\infty}\succ0\), then
\(\widehat I^{\mathrm{path}}_{n,\infty}\succ0\), and continuity of matrix
inversion gives
\[
\theta^{\mathrm{pc}}_{n,M}
=
\theta_{n,0}
+
\left(\widehat I^{\mathrm{path}}_{n,M}\right)^{-1}D_{n,M}
\to
\theta_{n,0}
+
\left(\widehat I^{\mathrm{path}}_{n,\infty}\right)^{-1}D_{n,\infty}
=
\theta^{\mathrm{pc}}_{n,\infty}
\qquad \text{a.s.}
\]
\end{proof}

Theorem \ref{thm:as_convergence} is an existence result for the infinite-depth raw-score construction. It shows that the raw predictive recursion has a well-defined limiting distribution and 
that the pathwise post-correction is also well defined whenever the limiting quadratic variation is nonsingular. 

For fixed \(n\), the distribution of \(\theta_{n,\infty}\) is generally non-Gaussian. Early predictive updates may contribute a non-negligible part of the total path variation, especially when \(n\) is small. Thus the raw-score construction can retain finite-sample non-Gaussian features of the predictive distribution. 

\paragraph{Verification of Assumption 1 for the vMF model.}
For the vMF model, condition (i) trivially holds, and 
condition (ii) holds automatically because the natural-parameter score \(s(x,c)=x-A_p(\|c\|)c/\|c\|\) satisfies \(\|s(x,c)\|\le 1+A_p(\|c\|)\le 2\). 

For condition (iii), 
$Q_{n,M}$ 
converges almost surely because its operator norm 
\[
\sum\nolimits_{m=1}^\infty
\left\|\gamma_{m}^2s_{n,m}s_{n,m}^\top\right\|_{\mathrm{op}}
=
\sum\nolimits_{m=1}^\infty
\gamma_{m}^2\left\|s_{n,m}\right\|^2
\le
4\sum\nolimits_{m=1}^\infty (n+m)^{-2}
<\infty.
\]
Thus \(Q_{n,M}\to Q_{n,\infty}\) a.s. for some finite nonnegative definite
matrix \(Q_{n,\infty}\).
The positive definite part is algebraic, hence left to the appendix. 

\subsection{Finite-Horizon Hybrid Tail Approximation}
\label{sec:hybrid_bounds}

The infinite-depth limit in Theorem \ref{thm:as_convergence} is idealized.
In computation, the predictive path must be stopped after finitely many steps.
If the MPS-R is stopped at horizon \(M\), the residual tail weight
\(r_{n,M}=\sum_{m>M}\gamma_{m}^2\) is omitted. This produces deterministic
variance shrinkage. The Hybrid MPS-R avoids this loss by replacing the
unresolved martingale tail with a Gaussian proxy.

Define the exact raw tail after time \(M\) by
$
    T_{n,M}=\sum_{m>M}\gamma_{m}s_{n,m}.
$ 
Conditional on \(\mathcal F_{n,M}\), this is a sum of future martingale
differences. Its mean is zero, and its leading conditional covariance is
obtained by freezing the Fisher information at the terminal value:
\[
    \mathbb E\{T_{n,M}T_{n,M}^{\top}\mid \mathcal F_{n,M}\}
    =
    \mathbb E\left[
        \sum\nolimits_{m>M}\gamma_{m}^2 I(\theta_{n,m-1}) 
        \,\middle|\, \mathcal F_{n,M}
    \right]
    \approx
    r_{n,M}I(\theta_{n,M}).
\]
Thus the natural finite-horizon approximation is a Gaussian tail with
covariance \(r_{n,M}I(\theta_{n,M})\), or with \(I(\theta_{n,M})\) replaced by a
terminal estimator \(\widehat I_{n,M}\).

Given \(\widehat I_{n,M}\), the ideal tail-corrected draw and its hybrid
approximation are
\[
\theta_{n,M}^{\mathrm{tail,pc}}
=
\theta_{n,0}
+
\widehat I_{n,M}^{-1}
\{D_{n,M}+T_{n,M}\},
\quad
\theta_{n,M}^{\mathrm{hyb,pc}}
=
\theta_{n,0}
+
\widehat I_{n,M}^{-1}
\{D_{n,M}+Z_{n,M}\},
\]
where
$
    Z_{n,M}\mid \mathcal F_{n,M}
    \sim
    \mathcal N_p(0,r_{n,M}\widehat I_{n,M}).
$
After post-correction, the Gaussian tail contributes leading covariance
\[
    \widehat I_{n,M}^{-1}
    (r_{n,M}\widehat I_{n,M})
    \widehat I_{n,M}^{-1}
    =
    r_{n,M}\widehat I_{n,M}^{-1},
\]
which is precisely the inverse-information-scale variance omitted by
finite-horizon truncation. The following theorem quantifies the error from
replacing the true martingale tail \(T_{n,M}\) by the Gaussian proxy
\(Z_{n,M}\). The comparison is made through smooth bounded test functions, so
the result is a weak-distribution approximation. 

\begin{assumption}[Local regularity]
\label{assum:hybrid_regularity}
Let \(K\subset\Theta\) be compact. On \(K\), assume:
\begin{enumerate}[(i)]
    \item \(\mathbb E_\theta\{s(X,\theta)\}=0\).
    \item \(\sup_{x,\theta\in K}\|s(x,\theta)\|\le S_K<\infty\).
    \item \(\|I(\theta)-I(\theta')\|_{\mathrm{op}}
    \le L_K\|\theta-\theta'\|\) for all \(\theta,\theta'\in K\).
    \item \(\lambda_{\min}\{I(\theta)\}\ge \lambda_K>0\) for all
    \(\theta\in K\).
\end{enumerate}
\end{assumption}

The compact set \(K\) is a localization device. The approximation below assumes,
under the conditional predictive law at time \(M\), that the continuation stays
in a region where the score is bounded, the Fisher information is Lipschitz, and
the information matrix is uniformly nonsingular. This is an assumption on the
conditional law, rather than conditioning on an event determined by future
draws. An unrestricted process can be handled by a stopped localization and an
explicit exit-probability term, as described after the theorem.

\begin{theorem}[Finite-horizon hybrid approximation]
\label{thm:hybrid_bound}
Suppose Assumption \ref{assum:hybrid_regularity} holds on a compact set \(K\).
Assume that, conditionally on \(\mathcal F_{n,M}\), the predictive continuation
after time \(M\) remains in \(K\) almost surely. Let \(\widehat I_{n,M}\) be a positive definite
information estimator, measurable with respect to the simulated path up to time
\(M\), satisfying
\[
    \|\widehat I_{n,M}-I(\theta_{n,M})\|_{\mathrm{op}}
    \le
    \varepsilon_{n,M},
    \qquad
    \lambda_{\min}(\widehat I_{n,M})\ge \lambda_K/2 .
\]
Then, for every \(\varphi\in C_b^3(\mathbb R^p)\),
\begin{equation}
\label{eq:hybrid_bound}
\begin{aligned}
    &
    \left|
    \mathbb E\{\varphi(\theta_{n,M}^{\mathrm{tail,pc}})\mid \mathcal F_{n,M}\}
    -
    \mathbb E\{\varphi(\theta_{n,M}^{\mathrm{hyb,pc}})\mid \mathcal F_{n,M}\}
    \right|
    \\
    &\qquad\le
    C_p
    \left[
    \frac{\|D^2\varphi\|_\infty}{\lambda_K^2}
    \left\{
        L_KS_K r_{n,M}^{3/2}
        +
        r_{n,M}\varepsilon_{n,M}
    \right\}
    +
    \frac{\|D^3\varphi\|_\infty}{\lambda_K^3}
    S_K^3 A_{n,M}
    \right],
\end{aligned}
\end{equation}
where \(C_p\) depends only on the dimension \(p\).
\end{theorem}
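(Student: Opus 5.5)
The plan is a Lindeberg replacement (telescoping swap) argument carried out conditionally on \(\mathcal F_{n,M}\). Conditionally on \(\mathcal F_{n,M}\), the quantities \(D_{n,M}\) and \(\widehat I_{n,M}\) are fixed. Define
\[
\psi(t)=\varphi\bigl(\theta_{n,0}+\widehat I_{n,M}^{-1}(D_{n,M}+t)\bigr),
\]
so the left side of \eqref{eq:hybrid_bound} equals \(|\mathbb E\{\psi(T_{n,M})-\psi(Z_{n,M})\mid\mathcal F_{n,M}\}|\). By the chain rule and \(\lambda_{\min}(\widehat I_{n,M})\ge\lambda_K/2\),
\[
\|D^k\psi\|_\infty\le (2/\lambda_K)^k\|D^k\varphi\|_\infty,\qquad k=2,3,
\]
which produces the factors \(\lambda_K^{-2}\) and \(\lambda_K^{-3}\) in the bound.

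Next I realize \(Z_{n,M}\) as \(\sum_{m>M}\gamma_m G_m\). Here the \(G_m\sim\mathcal N_p(0,\widehat I_{n,M})\) are i.i.d. given \(\mathcal F_{n,M}\) and independent of the predictive continuation; this is legitimate because \(\sum_{m>M}\gamma_m^2=r_{n,M}\). To avoid infinite telescoping I first truncate at a level \(N>M\). The partial tails \(T_{n,M}^{N}=\sum_{M<m\le N}\gamma_m s_{n,m}\) and \(Z^{N}\) converge in \(L^2\) to \(T_{n,M}\) and \(Z_{n,M}\), by Theorem~\ref{thm:as_convergence} and by Gaussian summability. Since \(\psi\) is Lipschitz, it therefore suffices to bound the truncated difference uniformly in \(N\).

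For \(M<k\le N\), set
\[
V_k=\sum_{M<m<k}\gamma_m s_{n,m}+\sum_{k<m\le N}\gamma_m G_m .
\]
The telescoping identity is
\[
\psi(T^N)-\psi(Z^N)=\sum_{k}\bigl\{\psi(V_k+\gamma_k s_{n,k})-\psi(V_k+\gamma_k G_k)\bigr\}.
\]
I Taylor-expand each summand to third order around \(V_k\). The first-order terms vanish after conditioning on \(\mathcal F_{n,k-1}\vee\sigma(G)\). This uses two facts: \(s_{n,k}\) is a martingale difference whose conditional law depends only on \(\theta_{n,k-1}\), and is therefore unaffected by the independent \(G\)'s; and \(G_k\) is centred and independent of \(V_k\). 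The second-order term equals
\[
\tfrac12\gamma_k^2\,\mathbb E\bigl[\langle D^2\psi(V_k),\,I(\theta_{n,k-1})-\widehat I_{n,M}\rangle\bigr].
\]
I bound it crudely in absolute value by
\[
\gamma_k^2\|D^2\psi\|_\infty\,\mathbb E\bigl\{L_K\|\theta_{n,k-1}-\theta_{n,M}\|+\varepsilon_{n,M}\bigr\},
\]
using Assumption~\ref{assum:hybrid_regularity}(iii), the triangle inequality, and the fact that the path stays in \(K\). Orthogonality of the martingale increments gives
\[
\mathbb E\|\theta_{n,k-1}-\theta_{n,M}\|\le\Bigl(\textstyle\sum_{M<m<k}\gamma_m^2S_K^2\Bigr)^{1/2}\le S_K r_{n,M}^{1/2}.
\]
Summing over \(k\) then gives \(L_KS_Kr_{n,M}^{3/2}+r_{n,M}\varepsilon_{n,M}\). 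The third-order remainders are at most
\[
\|D^3\psi\|_\infty\gamma_k^3\bigl(\|s_{n,k}\|^3+\mathbb E\|G_k\|^3\bigr).
\]
Since \(\|s\|\le S_K\), and \(\|\widehat I_{n,M}\|_{\mathrm{op}}\le\|I(\theta_{n,M})\|_{\mathrm{op}}+\varepsilon_{n,M}\) with \(\|I(\theta_{n,M})\|_{\mathrm{op}}\le S_K^2\), the Gaussian third moment is \(\le C_p S_K^3\), after absorbing \(\varepsilon_{n,M}\) or assuming it is \(\lesssim S_K^2\). Summing gives \(S_K^3A_{n,M}\).

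The main obstacle is the second-order step. There \(D^2\psi(V_k)\) is correlated with \(\theta_{n,k-1}\), and \(V_k\) mixes adapted scores with future Gaussians. The argument needs care to show that conditioning on \(\mathcal F_{n,k-1}\vee\sigma(G_{k+1},\dots,G_N)\) still gives conditional covariance \(I(\theta_{n,k-1})\) for \(s_{n,k}\). This should hold because the \(G\)'s are generated independently of the predictive mechanism. I do not attempt any cancellation in this term: the absolute-value bound already delivers the stated \(r_{n,M}^{3/2}\) rate.
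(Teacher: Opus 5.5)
Your argument is correct in substance, but it is organized differently from the paper's. The paper splits the comparison into two steps. Lemma~\ref{lem:tail_gaussian} runs the Lindeberg swap against Gaussian blocks with the frozen \emph{true} covariance $H_M=I(\theta_{n,M})$, which produces the $L_KS_Kr_{n,M}^{3/2}$ and $S_K^3A_{n,M}$ terms. Lemma~\ref{lem:gaussian_cov_perturb} (Gaussian interpolation) then moves from $\mathcal N_p(0,r_{n,M}H_M)$ to $\mathcal N_p(0,r_{n,M}\widehat I_{n,M})$ at cost $C_p\|D^2f\|_\infty r_{n,M}\varepsilon_{n,M}$.

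You instead swap directly against blocks $G_m\sim\mathcal N_p(0,\widehat I_{n,M})$. The freezing error and the estimation error then appear together in the single second-order term involving $I(\theta_{n,k-1})-\widehat I_{n,M}$. This makes the argument self-contained, with no interpolation lemma needed. You are also more explicit than the paper about the truncation and about why the conditional covariance of $s_{n,k}$ is unaffected by the independent Gaussians.

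The price of your route is the third-order Gaussian remainder. Here $\mathbb E\|G_k\|^3\le C_p\|\widehat I_{n,M}\|_{\mathrm{op}}^{3/2}\le C_p(S_K^2+\varepsilon_{n,M})^{3/2}$, and the statement places no restriction on $\varepsilon_{n,M}$. You cannot simply ``absorb'' $\varepsilon_{n,M}$ into the stated bound with a $p$-only constant, because $\|D^3\varphi\|_\infty$ is not controlled by $\|D^2\varphi\|_\infty$. For example, $\varphi(x)=\delta^3\sin(x_1/\delta)$ has $\|D^2\varphi\|_\infty=\delta$ and $\|D^3\varphi\|_\infty=1$.

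A case split repairs this:
\begin{itemize}
\item If $\varepsilon_{n,M}\le S_K^2$, your bound gives $2^{3/2}C_pS_K^3A_{n,M}$.
\item If $\varepsilon_{n,M}>S_K^2$, expand $\psi$ to second order about $0$. Both $T_{n,M}$ and $Z_{n,M}$ are conditionally centred, with $\mathbb E\|T_{n,M}\|^2\le S_K^2r_{n,M}$ and $\mathbb E\|Z_{n,M}\|^2\le p\,r_{n,M}(S_K^2+\varepsilon_{n,M})$. This gives $C_p\|D^2\psi\|_\infty r_{n,M}(S_K^2+\varepsilon_{n,M})\le 2C_p\|D^2\psi\|_\infty r_{n,M}\varepsilon_{n,M}$ directly.
\end{itemize}
Alternatively, adopt the paper's intermediate Gaussian $G_M\sim\mathcal N_p(0,r_{n,M}H_M)$. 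There $\|H_M\|_{\mathrm{op}}\le S_K^2$ holds automatically, and the dependence on $\varepsilon_{n,M}$ enters only through the second-derivative covariance-perturbation term.
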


\begin{proof}[Proof sketch]
The formal proof is deferred to Appendix \ref{app:proof_thm2}. 
Let \(H_M=I(\theta_{n,M})\) and define
$
    f(y)
    =
    \varphi\!\left[
        \theta_{n,0}
        +
        \widehat I_{n,M}^{-1}\{D_{n,M}+y\}
    \right].
$
Since \(\lambda_{\min}(\widehat I_{n,M})\ge \lambda_K/2\),
\[
    \|D^2 f\|_\infty
    \le
    C\lambda_K^{-2}\|D^2\varphi\|_\infty,
    \qquad
    \|D^3 f\|_\infty
    \le
    C\lambda_K^{-3}\|D^3\varphi\|_\infty .
\]
Let \(G_M\sim \mathcal N_p(0,r_{n,M}H_M)\). A smooth martingale
Lindeberg replacement, conditionally on \(\mathcal F_{n,M}\), gives
\[
\left|
\mathbb E\{f(T_{n,M})\mid \mathcal F_{n,M}\}
-
\mathbb E\{f(G_M)\mid \mathcal F_{n,M}\}
\right|
\le
C_p\left\{
\|D^2f\|_\infty L_KS_K r_{n,M}^{3/2}
+
\|D^3f\|_\infty S_K^3A_{n,M}
\right\}.
\]
The first term is the cost of freezing
\(I(\theta_{n,m-1})\) at \(I(\theta_{n,M})\) along the tail; the second is the
third-order Gaussian-approximation remainder. 

Next, since
\(Z_{n,M}\sim \mathcal N_p(0,r_{n,M}\widehat I_{n,M})\), the Gaussian covariance
perturbation bound yields
\[
\left|
\mathbb E\{f(G_M)\mid \mathcal F_{n,M}\}
-
\mathbb E\{f(Z_{n,M})\mid \mathcal F_{n,M}\}
\right|
\le
C_p\|D^2f\|_\infty r_{n,M}
\|\widehat I_{n,M}-H_M\|_{\mathrm{op}} .
\]
Using
\(\|\widehat I_{n,M}-H_M\|_{\mathrm{op}}\le \varepsilon_{n,M}\) and substituting
the derivative bounds for \(f\) gives \eqref{eq:hybrid_bound}.
\end{proof}

Theorem \ref{thm:hybrid_bound} gives a finite-horizon justification for the
Hybrid MPS-R. Its bound separates three effects. The term
\(L_KS_Kr_{n,M}^{3/2}\) is the covariance-freezing error caused by the movement
of the predictive parameter along the unresolved tail. The term
\(S_K^3A_{n,M}\) is the martingale Gaussian-approximation error, analogous to a
Berry--Esseen third-moment contribution. The term
\(r_{n,M}\varepsilon_{n,M}\) is the price of using the estimated information
matrix \(\widehat I_{n,M}\) instead of \(I(\theta_{n,M})\).

For the step size \(\gamma_{m}=(n+m)^{-1}\),
$r_{n,M} =
\sum\nolimits_{m>M}(n+m)^{-2}
\asymp (n+M)^{-1},
$
and 
$
A_{n,M}
    =
    \sum\nolimits_{m>M}(n+m)^{-3}
    \asymp
    (n+M)^{-2}.
$
Thus, for fixed \(n\), the Gaussian tail approximation becomes more accurate
as the explicitly simulated portion of the predictive path grows.

The bound is conditional on \(\mathcal F_{n,M}\), which is the natural
formulation for the algorithm: after the first \(M\) steps have been simulated,
\(D_{n,M}\), \(\theta_{n,M}\), and \(\widehat I_{n,M}\) are fixed, and the only
remaining randomness is the unresolved future tail. If the right-hand side of
\eqref{eq:hybrid_bound} is deterministic, this gives a uniform bound on the
conditional approximation error over all localized histories.  
For the unrestricted process, let
\(\tau_K=\inf\{m\ge M:\theta_{n,m}\notin K\}\). Applying the comparison to a
localized continuation that agrees with the original process up to \(\tau_K\),
and then comparing the localized and unrestricted laws, adds an exit-probability
term. If \(\mathcal E_{K,M}=\{\tau_K=\infty\}\), the resulting unconditional
bound has the form
\[
\left|
\mathbb E\{\varphi(\theta_{n,M}^{\mathrm{tail,pc}})\}
-
\mathbb E\{\varphi(\theta_{n,M}^{\mathrm{hyb,pc}})\}
\right| 
\le
B_{n,M}
+
2\|\varphi\|_\infty\,\mathbb P(\mathcal E_{K,M}^c),
\]
where \(B_{n,M}\) denotes the right-hand side of
\eqref{eq:hybrid_bound}. Hence localization is asymptotically harmless whenever
the probability of leaving the regularity region is negligible.

The estimation error of information matrix, \(\varepsilon_{n,M}\), depends on how
\(\widehat I_{n,M}\) is constructed. For the pooled estimator
\eqref{eq:pooled_information_estimator_theory}, define
$
    N_{\mathrm{eff}}(n,M)
    =
    \frac{w_{n,M}^2}{\sum_{m=1}^M\gamma_{m}^4}.
$
Under bounded scores and local Lipschitz continuity of \(I(\cdot)\), standard
matrix concentration gives the representative rate
\[
    \left\|
    \widehat I_{n,M,B}^{\mathrm{pool}}
    -
    I(\theta_{n,0})
    \right\|_{\mathrm{op}}
    =
    \mathcal O_p
    \left[
        \sqrt{\frac{\log p}{B\,N_{\mathrm{eff}}(n,M)}}
        +
        \sqrt{w_{n,M}}
    \right],
\]
up to constants depending on \(S_K\) and \(L_K\). Since
\[
    \|I(\theta_{n,M})-I(\theta_{n,0})\|_{\mathrm{op}}
    \le
    L_K\|\theta_{n,M}-\theta_{n,0}\|
    =
    \mathcal O_p(\sqrt{w_{n,M}}),
\]
the same rate controls the error relative to \(I(\theta_{n,M})\). If \(M\) is
fixed, then \(N_{\mathrm{eff}}(n,M)\asymp M\), so the pooled estimator becomes
consistent when \(B\to\infty\). Hence the explicit simulation depth \(M\) need
not scale with \(n\), but the terminal information estimator must still be
accurate.

\paragraph{Specialization to the von Mises--Fisher model.}
For the vMF model, localization has a simple interpretation. The score is
globally bounded by \(2\), but the Fisher information is not uniformly
nonsingular over all \(c\in\mathbb R^p\). A natural localized set is
\(K_R=\{c:\|c\|\le R\}\), which corresponds to bounding the concentration
parameter. 
On \(K_R\), the Fisher information is Lipschitz and uniformly nonsingular. Thus Theorem \ref{thm:hybrid_bound} applies to vMF predictive paths whose concentration remains in a bounded range.


\begin{corollary}[Local regularity for the vMF model]
\label{cor:vmf_bound}
For every \(R<\infty\), Assumption
\ref{assum:hybrid_regularity} holds on
\[
    K_R=\{c\in\mathbb R^p:\|c\|\le R\}.
\]
In particular, \(I(c)\) is Lipschitz continuous and uniformly nonsingular on
\(K_R\). 
The eigenvalues of \(I(c)\) are \(A_p(\kappa)/\kappa\) in directions
orthogonal to \(\mu\), 
and \(A_p'(\kappa)\) in the radial direction \(\mu\). 

With the continuous extensions
$
    \frac{A_p(\kappa)}{\kappa}\bigg|_{\kappa=0}
    =
    A_p'(0)
    =
    \frac1p,
$
we have \(I(0)=p^{-1}I_p\). 
Hence
$
    \lambda_R
    =
    \inf_{0\le \kappa\le R}
    \min
    \left\{
        \frac{A_p(\kappa)}{\kappa},
        A_p'(\kappa)
    \right\}
    >
    0.
$
Consequently, Theorem \ref{thm:hybrid_bound} applies to vMF predictive paths
localized in \(K_R\).
\end{corollary}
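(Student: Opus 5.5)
We verify the four items of Assumption~\ref{assum:hybrid_regularity} on \(K_R\) directly from the exponential-family structure. Item (i) is immediate: the model is a regular exponential family in \(c\) with log-partition \(\psi(c)=-\log c_p(\|c\|)\) (up to constants), so \(s(x,c)=x-\nabla\psi(c)\) and \(\mathbb E_c\{s(X,c)\}=0\). Item (ii) follows from \(\|x\|=1\) and \(0\le A_p(\kappa)<1\), which give \(S_K\le 2\).

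For item (iii), we use \(I(c)=\nabla^2\psi(c)=\mathrm{Var}_c(X)\). Since \(\psi\) is real-analytic on all of \(\mathbb R^p\), including \(c=0\), because \(\int_{S^{p-1}}e^{c^\top x}\,dx\) is an entire function of \(c\), its third derivatives are continuous and hence bounded on the compact ball \(K_R\). Concretely, \(\nabla^3\psi(c)\) is the third cumulant tensor of \(X\), which is bounded by \(8\) in norm because \(\|X\|\le1\). The mean value theorem then gives \(\|I(c)-I(c')\|_{\mathrm{op}}\le L_R\|c-c'\|\) with \(L_R\) depending only on \(p\) (even uniformly in \(R\)). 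This approach avoids handling the apparent singularity of the closed form \eqref{eq:vmf_fisher} at \(\kappa=0\).

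For the spectral statement and item (iv), we read the eigenstructure off \eqref{eq:vmf_fisher}. The matrices \(I_p-uu^\top\) and \(uu^\top\) are complementary orthogonal projections, so the eigenvalues are \(A_p(\kappa)/\kappa\) with multiplicity \(p-1\) on \(u^\perp\) and \(A_p'(\kappa)\) on \(u\). At \(c=0\), the uniform distribution has \(\mathrm{Var}(X)=p^{-1}I_p\). The series \(I_\nu(\kappa)=\sum_k(\kappa/2)^{2k+\nu}/\{k!\,\Gamma(k+\nu+1)\}\) gives \(A_p(\kappa)=\kappa/p+O(\kappa^3)\), so \(A_p(\kappa)/\kappa\to1/p\) and \(A_p'(0)=1/p\). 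Both continuous extensions therefore equal \(1/p\), consistent with \(I(0)\).

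The main point is strict positivity on \((0,R]\); after that, continuity on the compact interval \([0,R]\) gives \(\lambda_R>0\). Positivity of \(A_p(\kappa)/\kappa\) for \(\kappa>0\) is clear, since Bessel functions \(I_\nu(\kappa)>0\). For \(A_p'(\kappa)\), rather than proving Bessel-ratio inequalities (Amos-type bounds such as \(A_p'(\kappa)>0\)), we use the variance interpretation: \(A_p'(\kappa)=u^\top I(c)u=\mathrm{Var}_c(u^\top X)\). This quantity is strictly positive because \(u^\top X\) is nondegenerate under a vMF law with density bounded away from zero on the sphere. More generally, \(v^\top I(c)v=\mathrm{Var}_c(v^\top X)>0\) for every unit \(v\), so \(I(c)\succ0\) for every \(c\). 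The map \(c\mapsto\lambda_{\min}(I(c))\) is continuous, because \(I\) is continuous by (iii), so it attains a positive minimum on \(K_R\). By rotation invariance this minimum equals \(\lambda_R\) as defined in the statement. Theorem~\ref{thm:hybrid_bound} then applies with \(K=K_R\), \(S_K=2\), \(L_K=L_R\), and \(\lambda_K=\lambda_R\).
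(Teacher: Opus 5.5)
Your argument is correct, but it reaches the two substantive conclusions by a different route from the paper.

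For Lipschitz continuity, the paper stays with the closed form. It writes $I(c)=a(\kappa)I_p+\{b(\kappa)-a(\kappa)\}uu^\top$ with $a=A_p(\kappa)/\kappa$ and $b=A_p'(\kappa)$. It uses $b-a=\mathcal O(\kappa^2)$ to remove the apparent singularity of $uu^\top$ at $c=0$, then invokes smoothness away from $0$ and compactness of $K_R$, so its Lipschitz constant depends on $R$ only implicitly. You instead view $I=\nabla^2\psi$ and bound $\nabla^3\psi$, the third central-moment tensor of $X$. Since $\|X-\mathbb E_cX\|\le2$, this gives $\|I(c)-I(c')\|_{\mathrm{op}}\le 8\|c-c'\|$ globally and never touches the singularity at the origin. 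It also makes transparent the paper's later remark that the only obstruction to a global version of Assumption~\ref{assum:hybrid_regularity} is eigenvalue degeneration as $\kappa\to\infty$.

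For uniform nonsingularity, the paper simply asserts that $A_p'(\kappa)>0$ on $[0,R]$, implicitly relying on a Bessel-ratio inequality. Your identity $A_p'(\kappa)=u^\top I(c)u=\mathrm{Var}_c(u^\top X)>0$ proves this step in a self-contained way. Continuity of $c\mapsto\lambda_{\min}\{I(c)\}$ on the compact ball then yields $\lambda_R>0$.

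The paper's route keeps everything in explicit Bessel quantities, but you recover these anyway from the series expansion, so nothing is lost.

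One wording point: identifying $\min_{c\in K_R}\lambda_{\min}\{I(c)\}$ with the stated $\lambda_R$ comes from the eigendecomposition \eqref{eq:vmf_fisher} together with $I(0)=p^{-1}I_p$ (for $p\ge2$). Rotation invariance alone only shows that the minimum eigenvalue depends on $c$ through $\|c\|$.
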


The proof is deferred to the appendix. 
For the vMF model, the localization event in Theorem
\ref{thm:hybrid_bound} has the simple interpretation
$
    \mathcal E_{R,M}
    =
    \left\{
        \sup_{m\ge M}\|c_{n,m}\|\le R
    \right\}.
$
It excludes predictive continuations along which the concentration parameter
becomes arbitrarily large. On \(\mathcal E_{R,M}\), the theorem applies with
\(S_K=2\), \(\lambda_K=\lambda_R\), and \(L_K=L_R\), the Lipschitz constant of
\(I(c)\) on \(K_R\).

This localization is mainly a proof device. If the true natural parameter
\(c_\ast\) is finite and \(c_{n,0}\to c_\ast\), then for any
\(R>\|c_\ast\|\) sufficiently large, the probability of leaving \(K_R\) is
small in large samples. Indeed, since the vMF score is bounded,
\[
    \mathbb E\left(
        \sup_{m\ge 0}\|c_{n,m}-c_{n,0}\|^2
        \,\middle|\, X_{1:n}
    \right)
    \lesssim
    \sum\nolimits_{m=1}^\infty (n+m)^{-2} 
    =
    O(n^{-1}),
\]
by Doob's inequality for the martingale displacement. Thus the predictive path
moves only \(O_p(n^{-1/2})\) from its initialization. Consequently, for bounded
concentration regimes, localization does not affect the first-order
large-sample calibration.

The corollary also clarifies why a global statement would be incorrect.
Although the score remains bounded for all \(c\), the Fisher information
degenerates as \(\|c\|\to\infty\). For example,
\(A_p(\kappa)/\kappa\to0\), and \(A_p'(\kappa)\to0\), as
\(\kappa\to\infty\). Hence no positive lower bound on the eigenvalues of
\(I(c)\) can hold uniformly over all of \(\mathbb R^p\). 

\subsection{Bernstein--von Mises Calibration}
\label{sec:bvm}

We now turn to the large-sample regime \(n\to\infty\). A Bernstein--von Mises theorem is a conditional statement about the posterior distribution given the observed data. Therefore the appropriate centering is the efficient data-dependent estimator \(\theta_{n,0}\), rather than the fixed true parameter \(\theta^*\). The target limit is
\[
    \mathcal L
    \left[
        \sqrt n\{\theta-\theta_{n,0}\}
        \mid X_{1:n}
    \right]
    \Rightarrow
    \mathcal N_p\{0,I(\theta^*)^{-1}\}.
\]
Equivalently, the conditional posterior covariance should be \(n^{-1}I(\theta^*)^{-1}\) to first order.

\begin{assumption}[Large-sample regularity]
\label{assum:mle_bvm}
Let \(\theta^*\) be an interior point of \(\Theta\). Assume:
\begin{enumerate}[(i)]
    \item The initial MLE is consistent and asymptotically normal: \(\theta_{n,0}\xrightarrow{p}\theta^*\) and \(\sqrt n(\theta_{n,0}-\theta^*)\xrightarrow{d}\mathcal N_p\{0,I(\theta^*)^{-1}\}\).
    \item \(I(\theta^*)\) is positive definite, and Assumption \ref{assum:hybrid_regularity} holds on a compact neighborhood \(K\) of \(\theta^*\).
    \item The information estimator used in the terminal correction is measurable with respect to the observed data or is obtained from an independent calibration ensemble. It satisfies \(\Delta_{I,n}\equiv \|\widehat I_n-I(\theta_{n,0})\|_{\mathrm{op}}=o_p(1)\), and \(\lambda_{\min}(\widehat I_n)\ge \lambda_{\min}\{I(\theta^*)\}/2\) with probability tending to one.
\end{enumerate}
\end{assumption}

Assumption \ref{assum:mle_bvm}(iii) is essential. It is satisfied, for example, by a pooled estimator \(\widehat I_{n,M,B}^{\mathrm{pool}}\) computed from an independent calibration ensemble when \(B\,N_{\mathrm{eff}}(n,M)\to\infty\) and the predictive paths remain in a compact neighborhood of \(\theta^*\) with probability tending to one. Thus fixed \(M\) is compatible with Bernstein--von Mises calibration, provided that enough independent predictive paths are used to estimate the local Fisher scale consistently. Reusing the output paths themselves for pooled calibration is computationally convenient, but the resulting shared-matrix dependence is not the conditioning regime asserted by the theorem.

\begin{theorem}[Bernstein--von Mises limit for the Hybrid MPS-R]
\label{thm:bvm}
Suppose Assumptions \ref{assum:hybrid_regularity} and \ref{assum:mle_bvm} hold. Let \(M=M_n\) be a deterministic simulation horizon satisfying \(M_n/n\to 0\), which includes the practically important case of fixed \(M\). Define
$
    \theta_n^{\mathrm{hyb,pc}}
    =
    \theta_{n,0}
    +
    \widehat I_n^{-1}
    \{D_{n,M_n}+Z_{n,M_n}\} 
$, and 
$    Z_{n,M_n}\mid \mathcal F_{n,M_n}
    \sim
    \mathcal N_p(0,r_{n,M_n}\widehat I_n).
$
Then
\begin{equation}
\label{eq:bvm_centered}
    \mathcal L
    \left[
        \sqrt n
        \{\theta_n^{\mathrm{hyb,pc}}-\theta_{n,0}\}
        \mid X_{1:n}
    \right]
    \Rightarrow
    \mathcal N_p\{0,I(\theta^*)^{-1}\}
\end{equation}
in probability. 

In particular, the following smooth test-function bound holds.
For every \(\psi\in C_b^3(\mathbb R^p)\),
\[
    \left|
    \mathbb E
    \left[
        \psi
        \left\{
            \sqrt n
            (\theta_n^{\mathrm{hyb,pc}}-\theta_{n,0})
        \right\}
        \mid X_{1:n}
    \right]
    -
    \mathbb E\{\psi(Z)\}
    \right|
    =
    \mathcal O_p
    \left(
        \Delta_{I,n}
        +
        \sqrt{\frac{M_n}{n}}
        +
        n^{-1/2}
    \right),
\]
where \(Z\sim \mathcal N_p\{0,I(\theta^*)^{-1}\}\).
\end{theorem}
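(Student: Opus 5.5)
We decompose the scaled draw as
\[
\sqrt n\{\theta_n^{\mathrm{hyb,pc}}-\theta_{n,0}\}
=\widehat I_n^{-1}\sqrt n\,D_{n,M_n}+\widehat I_n^{-1}\sqrt n\,Z_{n,M_n},
\]
and treat the two pieces separately, conditionally on \(X_{1:n}\) (and on the independent calibration ensemble when \(\widehat I_n\) comes from one, so that \(\widehat I_n\) is a fixed matrix for the predictive path). Work on the event \(\Omega_n\) that \(\theta_{n,0}\in K^\circ\) at distance at least \(\delta>0\) from \(\partial K\), that \(\Delta_{I,n}\le \lambda_{\min}\{I(\theta^*)\}/4\), and that \(\lambda_{\min}(\widehat I_n)\ge\lambda_{\min}\{I(\theta^*)\}/2\). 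By Assumption~\ref{assum:mle_bvm}(i),(iii), \(\mathbb P(\Omega_n)\to1\), so it suffices to bound the conditional error on \(\Omega_n\) and absorb the complement into the \(\mathcal O_p\) statement.

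\textbf{Step 1: the explicit part is negligible.} Localize by stopping the path at its first exit from \(K\). By Doob's inequality and \(\|s\|\le S_K\),
\[
\mathbb E\Bigl(\sup_{m\le M_n}\|D_{n,m}\|^2\,\Big|\,X_{1:n}\Bigr)\le 4S_K^2\,w_{n,M_n}.
\]
Two consequences follow, using \(w_{n,M}\le M/\{n(n+M)\}\).
First, the exit probability before time \(M_n\) is \(O(w_{n,M_n}/\delta^2)\to0\).
Second,
\[
\mathbb E\bigl(\|\sqrt n\,\widehat I_n^{-1}D_{n,M_n}\|^2\,\big|\,X_{1:n}\bigr)\lesssim n\,w_{n,M_n}\lesssim M_n/n .
\]
Hence, for \(\psi\in C_b^3\),
\[
\bigl|\mathbb E\psi(U+V)-\mathbb E\psi(V)\bigr|\le\|D\psi\|_\infty\,\mathbb E\|U\|\lesssim\sqrt{M_n/n},
\]
where \(U=\sqrt n\,\widehat I_n^{-1}D_{n,M_n}\) and \(V=\sqrt n\,\widehat I_n^{-1}Z_{n,M_n}\). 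This uses only that \(\psi\) is Lipschitz, and it yields the \(\sqrt{M_n/n}\) term.

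\textbf{Step 2: the Gaussian part has the right covariance.} Conditionally on \(\mathcal F_{n,M_n}\) and \(\widehat I_n\), the tail term is exactly Gaussian:
\[
V\sim\mathcal N_p\bigl(0,\;n\,r_{n,M_n}\widehat I_n^{-1}\bigr).
\]
Because \(Z_{n,M_n}\) is drawn independently of the path given \(\widehat I_n\), we may condition on \(V\)'s covariance and average \(U\) separately; Step 1 used no independence beyond this. We then compare \(\mathcal N_p(0,\Sigma_n)\), with \(\Sigma_n=n\,r_{n,M_n}\widehat I_n^{-1}\), to \(\mathcal N_p\{0,I(\theta^*)^{-1}\}\) using the Gaussian covariance perturbation bound already invoked in the proof of Theorem~\ref{thm:hybrid_bound}:
\[
\bigl|\mathbb E\psi(\Sigma_1^{1/2}\xi)-\mathbb E\psi(\Sigma_2^{1/2}\xi)\bigr|\le C_p\|D^2\psi\|_\infty\|\Sigma_1-\Sigma_2\|_{\mathrm{op}} ,\qquad \xi\sim\mathcal N_p(0,I_p).
\]
The covariance difference then splits into three pieces.
\begin{itemize}
\item Since \(n\,r_{n,M_n}=n(n+M_n)^{-1}+O(n^{-1})\), we get \(|n\,r_{n,M_n}-1|\lesssim M_n/n+n^{-1}\). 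This is dominated by the other rates.
\item The resolvent identity \(A^{-1}-B^{-1}=A^{-1}(B-A)B^{-1}\) with the eigenvalue floors gives \(\|\widehat I_n^{-1}-I(\theta_{n,0})^{-1}\|_{\mathrm{op}}\lesssim\Delta_{I,n}\).
\item The Lipschitz bound in Assumption~\ref{assum:hybrid_regularity}(iii) gives \(\|I(\theta_{n,0})-I(\theta^*)\|_{\mathrm{op}}\le L_K\|\theta_{n,0}-\theta^*\|=\mathcal O_p(n^{-1/2})\) by Assumption~\ref{assum:mle_bvm}(i).
\end{itemize}
Combining these with Step 1 gives the stated \(\mathcal O_p(\Delta_{I,n}+\sqrt{M_n/n}+n^{-1/2})\) bound. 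Because \(C_b^3\) functions form a convergence-determining class, the weak limit \eqref{eq:bvm_centered} in probability follows, and the right-hand side is \(o_p(1)\).

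\textbf{Main obstacle.} The delicate point is the conditioning structure: \(\widehat I_n\) must not depend on the same path \(D_{n,M_n}\). Otherwise \(\Sigma_n\) is random jointly with \(U\), and the conditional Gaussian computation fails. I would handle this exactly as in Assumption~\ref{assum:mle_bvm}(iii), treating \(\widehat I_n\) as fixed by conditioning on the data and the calibration ensemble.

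A secondary technicality is that the localization event depends on \(X_{1:n}\) through \(\theta_{n,0}\). I would handle it by making the Doob bound uniform on \(\Omega_n\), so that every constant depends only on \(K\), \(S_K\), \(L_K\) and \(\lambda_{\min}\{I(\theta^*)\}\).
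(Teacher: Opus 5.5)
Your proposal is correct and follows essentially the same route as the paper's proof. The paper likewise splits $\sqrt n(\theta_n^{\mathrm{hyb,pc}}-\theta_{n,0})$ into $\sqrt n\,\widehat I_n^{-1}D_{n,M_n}$, controlled by the martingale $L^2$ bound $n w_{n,M_n}\le M_n/n$ and the Lipschitz property of $\psi$, plus an exactly Gaussian tail compared to $\mathcal N_p\{0,I(\theta^*)^{-1}\}$ via the covariance-perturbation lemma, $|nr_{n,M_n}-1|=O(n^{-1}+M_n/n)$, the resolvent identity, and Lipschitz continuity of $I(\cdot)$.

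Your stopping-time/Doob localization rigorously justifies the ``path stays in $K$'' event that the paper only asserts. You should still close with the paper's tower-property step, which passes from conditioning on the data and calibration ensemble back to conditioning on $X_{1:n}$ alone.
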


\begin{proof}[Proof sketch]
The formal proof is deferred to Appendix \ref{app:proof_thm3}.  
Let \(M=M_n\), \(w_n=w_{n,M_n}\), \(r_n=r_{n,M_n}\), and
\(\widehat I=\widehat I_n\). Decompose
\[
    \sqrt n(\theta_n^{\mathrm{hyb,pc}}-\theta_{n,0})
    =
    R_n+G_n,
    \qquad
    R_n=\sqrt n\,\widehat I^{-1}D_{n,M},
    \quad
    G_n=\sqrt n\,\widehat I^{-1}Z_{n,M}.
\]
On the high-probability event where the predictive path remains in the local
regularity set and \(\widehat I\) is uniformly nonsingular, martingale
orthogonality gives
\(\mathbb E\{\|D_{n,M}\|^2\mid X_{1:n}\}\le Cw_n\). Hence
\(\mathbb E\{\|R_n\|\mid X_{1:n}\}\le C\sqrt{nw_n}\le C\sqrt{M_n/n}\), so the
explicitly simulated part is negligible on the \(\sqrt n\)-scale when
\(M_n/n\to0\).

Next, we identify the Gaussian component. 
Conditionally on the simulated path, 
\(G_n\sim \mathcal N_p(0,\Sigma_n)\), where
\(\Sigma_n=nr_n\widehat I^{-1}\). Since
\(w_{n,\infty}=\sum_{m\ge1}(n+m)^{-2}=n^{-1}+O(n^{-2})\) and
\(r_n=w_{n,\infty}-w_n\), we have
\(nr_n=1+O(n^{-1}+M_n/n)\). Moreover, by consistency of the calibration matrix,
local Lipschitz continuity of \(I(\cdot)\), and
\(\theta_{n,0}-\theta_\ast=O_p(n^{-1/2})\),
\[
    \|\widehat I^{-1}-I(\theta_\ast)^{-1}\|_{\mathrm{op}}
    =
    O_p(\Delta_{I,n}+n^{-1/2}).
\]
Therefore
$
    \|\Sigma_n-I(\theta_\ast)^{-1}\|_{\mathrm{op}}
    =
    O_p\!\left(\Delta_{I,n}+n^{-1/2}+M_n/n\right).
$

For any \(\psi\in C_b^2(\mathbb R^p)\), the contribution of \(R_n\) is bounded
by \(\|D\psi\|_\infty \mathbb E(\|R_n\|\mid X_{1:n})\), which is
\(O_p(\sqrt{M_n/n})\). The difference between \(G_n\) and
\(Z_\ast\sim\mathcal N_p\{0,I(\theta_\ast)^{-1}\}\) is controlled by a Gaussian
covariance perturbation bound. Combining the two bounds gives
\[
\left|
\mathbb E\!\left[
    \psi\{\sqrt n(\theta_n^{\mathrm{hyb,pc}}-\theta_{n,0})\}
    \,\middle|\, X_{1:n}
\right]
-
\mathbb E\{\psi(Z_\ast)\}
\right|
=
O_p\!\left(
    \Delta_{I,n}
    +
    \sqrt{M_n/n}
    +
    n^{-1/2}
\right).
\]
Since \(\Delta_{I,n}=o_p(1)\) and \(M_n/n\to0\), the conditional weak
convergence follows. If \(\widehat I_n\) is obtained from an independent
calibration ensemble, the same conclusion follows by first conditioning on the
ensemble and then applying the tower property.
\end{proof}

Theorem \ref{thm:bvm} establishes the first-order frequentist calibration of the Hybrid MPS-R. The statement is conditional on the observed data: after centering at \(\theta_{n,0}\) and scaling by \(\sqrt n\), the hybrid post-corrected posterior converges to the usual inverse-Fisher Gaussian limit. This is the Bernstein--von Mises target for a regular parametric posterior.

When \(M_n/n\to0\), the explicitly simulated raw-score component is
\(o_p(1)\) on the \(\sqrt n\)-scale, so the Gaussian tail determines the
first-order limit. The theorem should therefore be interpreted as establishing
first-order calibration, rather than an asymptotic improvement over the direct
Wald or Laplace approximation
\(N_p\{\theta_{n,0},n^{-1}I(\theta_{n,0})^{-1}\}\). The potential distinction is
finite-sample: early predictive updates can retain non-Gaussian features when
their contribution to total variation is non-negligible.

The theorem also explains the role of the simulation horizon. For the MPS-R stopped at \(M\), the accumulated variance weight is
$
w_{n,M}
=  
\sum\nolimits_{m=1}^M(n+m)^{-2}
\approx
\frac{M}{n(n+M)}.
$
If \(M\) is fixed while \(n\to\infty\), then \(w_{n,M}=\mathcal O(M/n^2)\), which is too small relative to the required posterior scale \(n^{-1}\). Thus a non-hybrid truncated path would require \(M\gg n\) to recover the full Bernstein--von Mises variance.

The Hybrid MPS-R removes this requirement by adding the missing deterministic tail weight:
\[
    w_{n,M}+r_{n,M}
    =
    w_{n,\infty}
    =
    n^{-1}+\mathcal O(n^{-2}).
\]
Consequently, the explicit simulation depth \(M\) does not need to grow proportionally to \(n\) in order to recover the correct first-order posterior variance. The terminal information estimator \(\widehat I_n\), however, must still be consistent. If it is estimated from predictive paths, consistency can be achieved by increasing the path length \(M\), increasing the number of parallel paths \(B\), or both.

\paragraph{Specialization to the vMF model.}
For the vMF model, the Bernstein--von Mises result applies directly in the
natural parameter \(c=\kappa\mu\in\mathbb R^p\). 
Suppose
\(X_1,\ldots,X_n\) are i.i.d. from a vMF distribution with finite natural
parameter \(c_\ast\). The parameter space is \(\mathbb R^p\), so every finite
\(c_\ast\) is an interior point. The log-likelihood has the exponential-family
form \(\ell_n(c)=n\{c^\top\overline X_n-\psi(c)\}\), with
\(\nabla\psi(c)=\mathbb E_c(X)\) and \(\nabla^2\psi(c)=I(c)\). Standard
regular likelihood theory for full exponential families therefore gives
\[
    c_{n,0}\overset{p}{\longrightarrow}c_\ast,
    \qquad
    \sqrt n(c_{n,0}-c_\ast)
    \rightsquigarrow
    \mathcal N_p\{0,I(c_\ast)^{-1}\}.
\]

The local regularity assumptions are also satisfied. By Corollary
\ref{cor:vmf_bound}, the score is globally bounded by \(2\), and \(I(c)\) is
Lipschitz and uniformly nonsingular on every bounded set
\(K_R=\{c:\|c\|\le R\}\). If \(R>\|c_\ast\|\), then
\(c_{n,0}\in K_R\) with probability tending to one. Moreover, since the
predictive displacement is a bounded-increment martingale,
\[
    \mathbb E\!\left(
        \sup_{m\ge0}\|c_{n,m}-c_{n,0}\|^2
        \,\middle|\, X_{1:n}
    \right)
    \le
    C\sum\nolimits_{m\ge1}(n+m)^{-2} 
    =
    O(n^{-1}),
\]
by Doob's inequality. Hence the probability that the predictive path leaves
\(K_R\) tends to zero for any sufficiently large \(R\). Thus localization is
asymptotically harmless for finite-concentration vMF models.

The remaining condition is consistency of the terminal information estimator.
For vMF models, one may use the analytic plug-in estimator
\(\widehat I_n=I(c_{n,0})\), in which case \(\Delta_{I,n}=0\). Alternatively,
the pooled quadratic-variation estimator satisfies
\(\|\widehat I^{\mathrm{pool}}_{n,M,B}-I(c_{n,0})\|_{\mathrm{op}}
=
O_p\{(\log p/(B N_{\mathrm{eff}}(n,M)))^{1/2}+\sqrt{w_{n,M}}\}\).
Thus it is consistent whenever \(B N_{\mathrm{eff}}(n,M)\to\infty\) and
\(w_{n,M}\to0\); in particular, for fixed \(M\), it is enough that
\(B\to\infty\).

Consequently, the Hybrid MPS-R draw
$
    c_n^{\mathrm{hyb,pc}}
    =
    c_{n,0}
    +
    \widehat I_n^{-1}\{D_{n,M_n}+Z_{n,M_n}\},
$
where 
$
Z_{n,M_n}\mid\mathcal F_{n,M_n}
\sim
\mathcal N_p(0,r_{n,M_n}\widehat I_n)$,
satisfies  
\[
    \mathcal L\!\left[
        \sqrt n\{c_n^{\mathrm{hyb,pc}}-c_{n,0}\}
        \,\middle|\, X_{1:n}
    \right]
    \rightsquigarrow
    \mathcal N_p\{0,I(c_\ast)^{-1}\}
\]
in probability, 
provided \(M_n/n\to0\) and \(\widehat I_n\) is consistent.

\paragraph{Implications for \(\kappa\) and mean direction.}
The theorem is stated for the Euclidean natural parameter \(c\). Smooth
functionals then follow by the delta method. If \(c_\ast\ne0\), then
\(\kappa(c)=\|c\|\) is differentiable at \(c_\ast\), and for \(p=2\) the mean
direction \(\phi(c)=\operatorname{atan2}(c_2,c_1)\) is locally smooth after
choosing a local angular branch. Hence posterior calibration for
\((\kappa,\phi)\) follows from the natural-parameter Bernstein--von Mises
limit. If \(c_\ast=0\), the natural-parameter limit remains valid, but
\(\kappa=\|c\|\) is not differentiable and the mean direction is not identified;
in that case, the Bernstein--von Mises statement should be interpreted in the
natural parameter \(c\).

\paragraph{Practical implications for choosing \(M\) and \(B\).}
Theorem \ref{thm:bvm} shows that \(M\) need not grow proportionally to \(n\). 
In regular low-dimensional models, a moderate fixed \(M\) can be sufficient, while additional computation is often better spent increasing \(B\), which exploits the embarrassingly parallel structure of the sampler.

For the vMF model, the globally bounded score and locally smooth Fisher information make the hybrid approximation especially stable on bounded concentration ranges. In the simulations in Section \ref{sec:simulation}, a relatively small truncation depth, such as \(M=100\), is already sufficient to match the MCMC benchmark closely when paired with a large number of parallel predictive paths.

\section{Simulation Studies}
\label{sec:simulation}


This section evaluates the proposed samplers against MCMC benchmarks for the von Mises--Fisher model. The experiments examine the effect of finite-horizon truncation, the correction supplied by the Gaussian tail, and the frequentist calibration of nominal credible sets. We consider circular data ($p=2$) and spherical data ($p=3$).

In both settings the true concentration is $\kappa^*=4$. For $p=2$ the true mean direction is $\phi^*=2$ radians, and the samplers operate on the natural parameter $c=\kappa(\cos\phi,\sin\phi)^\top\in\mathbb R^2$. For $p=3$ the true mean direction is
$\mu^*=\bigl(\sqrt6/4,\,\sqrt6/4,\,1/2\bigr)^\top$
(colatitude $\vartheta_1^*=\pi/3$, longitude $\vartheta_2^*=\pi/4$)
and the samplers operate on $c=\kappa\mu\in\mathbb R^3$.


\paragraph{Initialization.} Given observations $X_{1:n}\subset\mathbb S^{p-1}$, all samplers start from the maximum likelihood estimator $c_{n,0}$. Let $\bar X_n=n^{-1}\sum_{i=1}^n X_i$ and $R_n=\|\bar X_n\|$. When $0<R_n<1$, set $\widehat\mu_n=\bar X_n/R_n$, $\widehat\kappa_n=A_p^{-1}(R_n)$, and $c_{n,0}=\widehat\kappa_n\widehat\mu_n$. If $R_n$ is numerically indistinguishable from zero, we set $c_{n,0}=0$, corresponding to an approximately uniform initial fit.

\subsection{Benchmarks and Competing Samplers}
\label{sec:benchmarks}
For $p=2$ we use the conjugate Gibbs sampler of \citet{forbes2015fast}, in which both full conditionals are available in closed form, and we take the neutral hyperparameters $c_0=\mathcal R_0=\phi_0=0$. This construction relies on the closed form of the circular normalizing constant and does not extend to $p=3$. For the spherical case we therefore use the conjugate framework of \citet{straub2017nonparametric}, in which the full conditional for $\mu$ remains von Mises--Fisher while $\kappa$ is updated by a slice sampler, and we take the improper limiting prior obtained as $a,b\to0^{+}$. Neither benchmark carries prior information beyond what is needed for a proper posterior. Both are described in detail in Appendix~\ref{app:mcmc}.

Alongside the benchmarks we report three martingale posterior samplers. MPS-R is the raw-score sampler with terminal post-correction and no tail term, while Hybrid MPS-R adds the Gaussian tail. Hybrid MPS-FP replaces the raw-score increments in the predictive path by Fisher-preconditioned increments while retaining the same Gaussian tail, so that a comparison between the two hybrids isolates the cost of pathwise preconditioning. We use the hybrid rather than the fully truncated preconditioned scheme, since the latter requires $M\gg n$ to reach the same calibration and is correspondingly more expensive.

\subsection{Posterior Trajectories and the Effect of Truncation}\label{sec:single_sim}

We first examine the qualitative behavior of the marginal posterior for the concentration parameter $\kappa$ in the circular model across varying regimes of sample size $n$ and simulation depth $M$. Throughout these experiments, the ensemble size is fixed at $B=1000$ to ensure a stable pooled quadratic-variation estimate $\hat{I}_{\text{pool}}$.

At $n=10$ the martingale posterior retains the non-Gaussian shape of the likelihood, and both samplers preserve the right skewness of the posterior for $\kappa$. Here $M=1000$ is large relative to $n$, so the tail weight omitted by MPS-R is small, and the Gaussian tail allows Hybrid MPS-R to reproduce the same density from a path ten times shorter; both densities are close to the MCMC benchmark, as shown in Figure~\ref{fig:MPS_n10}. This is consistent with the finite-horizon bound of Theorem~\ref{thm:hybrid_bound}, 
under which the tail proxy restores the variance weight without altering the non-Gaussian features resolved by the early predictive steps. 



\begin{figure}[H]
    \centering
    \includegraphics[width=0.9\textwidth]{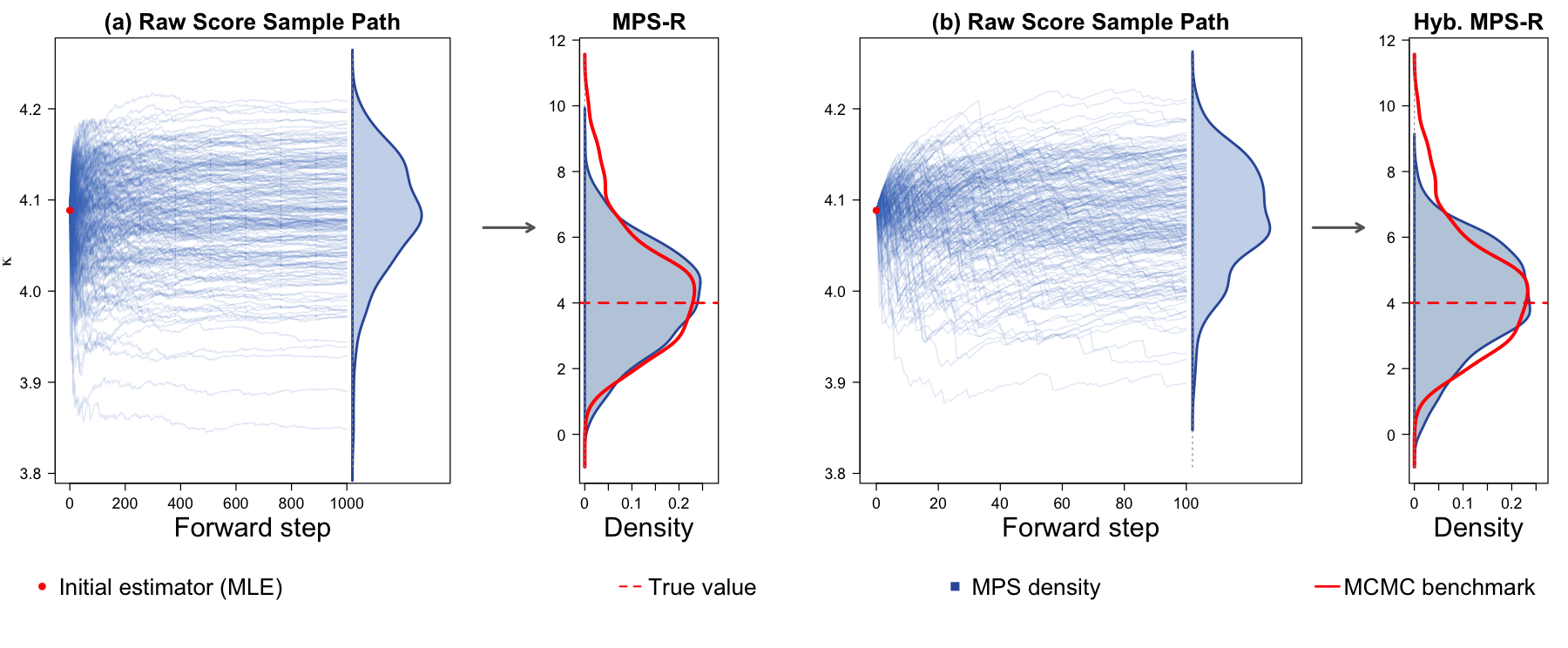}
    \captionsetup{width=0.90\linewidth, font=small}
    \caption{Marginal posterior density of $\kappa$: (a) $n=10$, MPS-R, $M=1000$, (b) $n=10$, Hybrid MPS-R, $M=100$.}

    \label{fig:MPS_n10}
\end{figure}



At $n=500$, the share of the total predictive variance $w_{n,\infty}\approx n^{-1}$ that an explicitly simulated path of length $M$ accumulates is $w_{n,M}\approx M/\{n(n+M)\}$, so even $M=1000$ leaves about one third of the required variance unresolved at this sample size. The MPS-R density is correspondingly under-dispersed relative to the benchmark, as seen in panel (a) of Figure~\ref{fig:MPS_n500}, and this is the finite-horizon variance deficit associated with the omitted tail weight $r_{n,M}$ in Section~\ref{sec:bvm}.  Hybrid MPS-R restores that weight by construction and agrees closely with the benchmark at $M=100$; see panel (b). The hybrid sampler is thus accurate at a horizon ten times shorter than the one at which MPS-R is already visibly deficient, which indicates that its behaviour is largely insensitive to the choice of simulation depth. The later coverage study in Section~\ref{sec:coverage} quantifies both effects.

\begin{figure}[H]
    \centering
    \includegraphics[width=0.9\textwidth]{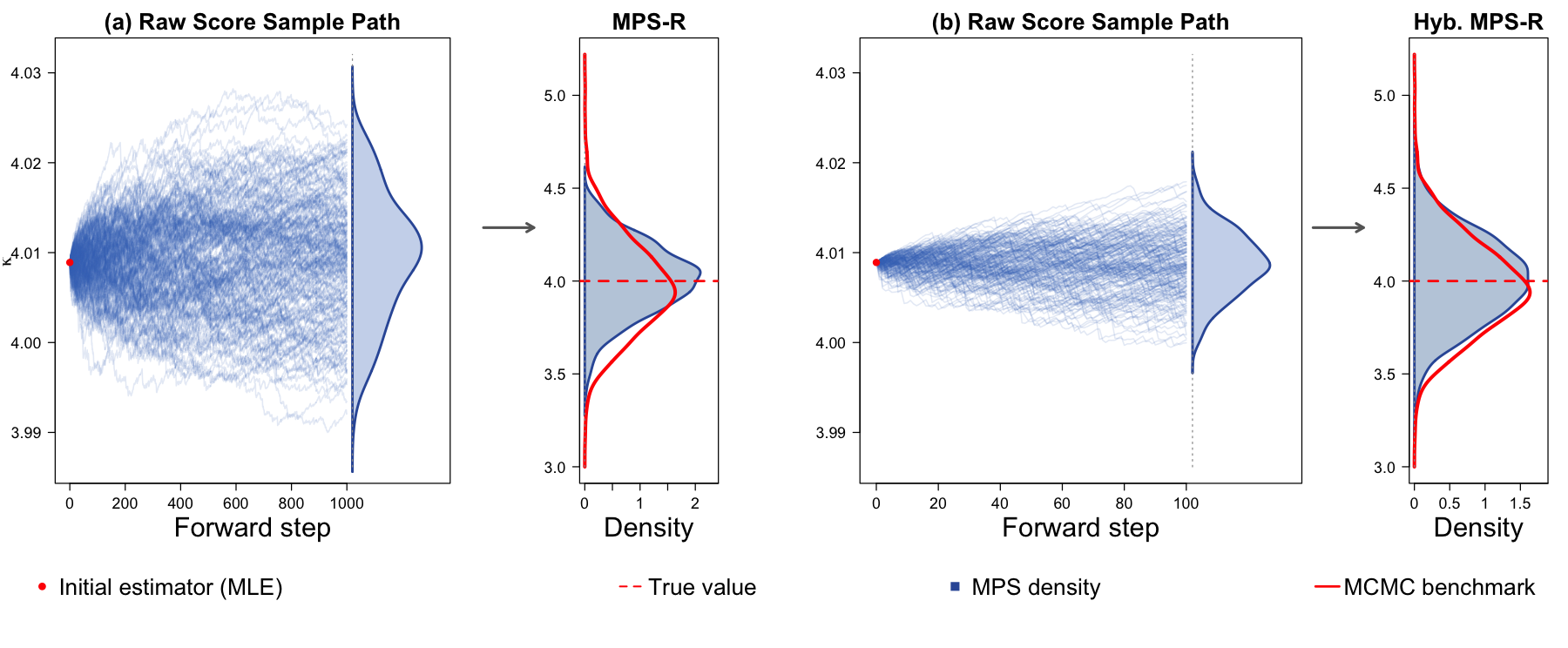}
    \captionsetup{width=0.90\linewidth, font=small}
    \caption{Marginal posterior density of $\kappa$: (a) $n=500$, MPS-R, $M=1000$, (b) $n=500$, Hybrid MPS-R, $M=100$.}
    \label{fig:MPS_n500}
\end{figure}

\subsection{Frequentist Coverage and Results}
\label{sec:coverage}

We now assess frequentist calibration over $5000$ independent replicates, using $B=500$ predictive paths, 
with simulation depth $M=1000$ for the truncated schemes and $M=100$ for the hybrid schemes. Each design is evaluated at the three sample sizes $n=10$, $100$ and $500$. We report the circular model first, followed by the spherical model.

\begin{table}[H]
\centering
\footnotesize
\captionsetup{width=0.90\linewidth, font=small}
\caption{Circular model ($p=2$). Empirical coverage and average length of nominal $95\%$ credible intervals over $5000$ replicates.}
\label{tab:sim_results_p2}
\begin{tabular}{l l cc cc cc}
\toprule
& & \multicolumn{2}{c}{$n = 10$} & \multicolumn{2}{c}{$n = 100$} & \multicolumn{2}{c}{$n = 500$} \\
\cmidrule(lr){3-4} \cmidrule(lr){5-6} \cmidrule(lr){7-8}
\textbf{Method} & \textbf{Param.} & Cov. (\%) & Len. & Cov. (\%) & Len. & Cov. (\%) & Len. \\
\midrule
\textbf{MCMC} & $\phi$ & 92.1 & 0.681 & 94.4 & 0.211 & 95.1 & 0.0941 \\
& $\kappa$ & 92.8 & 9.588 & 96.1 & 2.220 & 95.1 & 0.9734 \\
\textit{Normalized CPU time} & & \multicolumn{2}{c}{0.495 s} & \multicolumn{2}{c}{0.444 s} & \multicolumn{2}{c}{0.439 s} \\
\midrule
\textbf{Hybrid MPS-FP} & $\phi$ & 94.6 & 0.726 & 94.5 & 0.211 & 95.2 & 0.0938 \\
& $\kappa$ & 95.4 & 8.988 & 94.9 & 2.036 & 93.7 & 0.8909 \\
\textit{Normalized CPU time} & & \multicolumn{2}{c}{3.214 s} & \multicolumn{2}{c}{3.330 s} & \multicolumn{2}{c}{3.180 s} \\
\midrule
\textbf{MPS-R} & $\phi$ & 99.1 & 1.119 & 94.0 & 0.206 & 89.1 & 0.0767 \\
& $\kappa$ & 94.5 & 8.004 & 93.4 & 1.930 & 88.1 & 0.7260 \\
\textit{Normalized CPU time} & & \multicolumn{2}{c}{20.061 s} & \multicolumn{2}{c}{20.336 s} & \multicolumn{2}{c}{19.810 s} \\
\midrule
\textbf{Hybrid MPS-R} & $\phi$ & \textbf{99.2} & \textbf{1.122} & \textbf{95.0} & \textbf{0.215} & \textbf{95.1} & \textbf{0.0940} \\
& $\kappa$ & \textbf{94.8} & \textbf{8.054} & \textbf{94.7} & \textbf{2.027} & \textbf{93.6} & \textbf{0.8902} \\
\textit{Normalized CPU time} & & \multicolumn{2}{c}{2.801 s} & \multicolumn{2}{c}{2.791 s} & \multicolumn{2}{c}{2.771 s} \\
\bottomrule
\multicolumn{8}{p{13cm}}{\footnotesize \textit{Note:} Truncated schemes use $M=1000$ and hybrid schemes use $M=100$, with $B=500$ predictive paths. For MPS, runtime is single-core seconds per $1000$ generated output draws; for MCMC, it is seconds per $1000$ estimated effective draws after a $5000$-iteration burn-in. Under pooled calibration, MPS outputs share an estimated calibration matrix and are therefore not strictly independent. For coverage evaluation, angular draws are unwrapped on the branch centered at the true $\phi^*$ before equal-tailed intervals are formed.}
\end{tabular}
\end{table}

In the circular model, Hybrid MPS-R covers $\kappa$ close to the nominal level at all three sample sizes, between $93.6\%$ and $94.8\%$, and covers $\phi$ at $95.0\%$ and $95.1\%$ at $n=100$ and $n=500$; see Table~\ref{tab:sim_results_p2}. Its interval lengths sit at the inverse-Fisher scale, for instance $0.890$ for $\kappa$ at $n=500$ against the asymptotic value $2z_{0.975}\{nA_p'(\kappa^*)\}^{-1/2}\approx0.89$. This is consistent with the mechanism described by Theorem~\ref{thm:bvm}, under which the hybrid draw attains the Bernstein--von Mises limit at fixed simulation depth provided the terminal calibration matrix is consistent. Here that condition is approximated numerically with a pooled estimate based on $B=500$ paths; formally, for fixed $M$, the consistency statement corresponds to $B\to\infty$. At $n=10$, the Hybrid MPS-R interval for $\phi$ is conservative, with $99.2\%$ coverage and average length $1.122$, compared with $92.1\%$ and $0.681$ for MCMC. Thus terminal covariance calibration does not imply exact finite-sample distributional agreement. The benchmark intervals for $\kappa$ are somewhat wider than the inverse-Fisher scale at $n=100$, which is consistent with their coverage of $96.1\%$, while at $n=10$ the benchmark covers $\kappa$ at $92.8\%$, slightly further from the nominal level than Hybrid MPS-R.

MPS-R exhibits the deterministic variance shrinkage described in Section~\ref{sec:bvm}.
The post-corrected truncated draw retains only the realized weight $w_{n,M}$, so to first order its spread is smaller than the hybrid spread by the factor $\{w_{n,M}/w_{n,\infty}\}^{1/2}$. At simulation depth $M=1000$ this factor equals $0.995$, $0.953$ and
$0.816$ for $n=10$, $100$ and $500$.
The ratios of MPS-R to Hybrid MPS-R interval lengths in
Table~\ref{tab:sim_results_p2} match these values for both $\phi$ and $\kappa$.
Coverage of MPS-R accordingly falls to $88$--$89\%$ at $n=500$, where the omitted tail $r_{n,M}$ is largest relative to $w_{n,\infty}$.
The Gaussian tail restores the missing deterministic weight without lengthening the explicit path; Theorem~\ref{thm:bvm} then gives Bernstein--von Mises calibration for the hybrid sampler, which truncated MPS-R cannot attain unless $M\gg n$.

We next repeat the study on the sphere, where the benchmark is the sampler of \citet{straub2017nonparametric} and the directional target is the mean direction $\mu$ rather than a single angle. The results of Section~\ref{sec:bvm} are stated for a regular parametric model on $\mathbb{R}^p$, and the shrinkage factor $\{w_{n,M}/w_{n,\infty}\}^{1/2}$ depends only on the sample size and the simulation depth, so the conclusions drawn for the circular model should carry over to $p=3$ without modification. This is what the spherical design shows; see Table~\ref{tab:sim_results_p3}. Hybrid MPS-R covers $\mu$ and $\kappa$ between $94.7\%$ and $95.3\%$ at $n=100$ and $n=500$, with average credible-set sizes within $1.5\%$ of the benchmark, and Hybrid MPS-FP is calibrated over the same range. Coverage of MPS-R falls to $86.8\%$ for $\mu$ and $88.9\%$ for $\kappa$ at $n=500$, with length ratios of $0.816$ and $0.815$ relative to Hybrid MPS-R, matching the factor reported above.

\begin{table}[H]
\centering
\footnotesize
\captionsetup{width=0.90\linewidth, font=small}
\caption{Spherical model ($p=3$). Empirical coverage and average size of nominal $95\%$ credible sets over $5000$ replicates.}
\label{tab:sim_results_p3}
\begin{tabular}{l l cc cc cc}
\toprule
& & \multicolumn{2}{c}{$n = 10$} & \multicolumn{2}{c}{$n = 100$} & \multicolumn{2}{c}{$n = 500$} \\
\cmidrule(lr){3-4} \cmidrule(lr){5-6} \cmidrule(lr){7-8}
\textbf{Method} & \textbf{Param.} & Cov. (\%) & Len. & Cov. (\%) & Len. & Cov. (\%) & Len. \\
\midrule
\textbf{MCMC} & $\mu$ & 93.9 & 26.603 & 95.1 & 8.077 & 94.9 & 3.6043 \\
& $\kappa$ & 92.7 & 6.248 & 94.6 & 1.602 & 94.7 & 0.7048 \\
\textit{Normalized CPU time} & & \multicolumn{2}{c}{0.946 s} & \multicolumn{2}{c}{0.835 s} & \multicolumn{2}{c}{0.827 s} \\
\midrule
\textbf{Hybrid MPS-FP} & $\mu$ & 93.8 & 25.758 & 95.1 & 8.083 & 95.1 & 3.6104 \\
& $\kappa$ & 92.8 & 6.150 & 94.7 & 1.600 & 94.6 & 0.7043 \\
\textit{Normalized CPU time} & & \multicolumn{2}{c}{5.622 s} & \multicolumn{2}{c}{5.283 s} & \multicolumn{2}{c}{5.382 s} \\
\midrule
\textbf{MPS-R} & $\mu$ & 97.7 & 30.984 & 94.1 & 7.822 & 86.8 & 2.9492 \\
& $\kappa$ & 95.4 & 5.713 & 93.5 & 1.515 & 88.9 & 0.5741 \\
\textit{Normalized CPU time} & & \multicolumn{2}{c}{31.383 s} & \multicolumn{2}{c}{32.484 s} & \multicolumn{2}{c}{29.203 s} \\
\midrule
\textbf{Hybrid MPS-R} & $\mu$ & \textbf{97.7} & \textbf{31.040} & \textbf{95.3} & \textbf{8.186} & \textbf{95.0} & \textbf{3.6156} \\
& $\kappa$ & \textbf{95.8} & \textbf{5.742} & \textbf{94.9} & \textbf{1.592} & \textbf{94.7} & \textbf{0.7046} \\
\textit{Normalized CPU time} & & \multicolumn{2}{c}{4.537 s} & \multicolumn{2}{c}{5.062 s} & \multicolumn{2}{c}{3.882 s} \\
\bottomrule
\multicolumn{8}{p{13cm}}{\footnotesize \textit{Note:} For $\kappa$, Len.\ is the average length of the equal-tailed credible interval; for $\mu$, it is the average angular radius in degrees of the credible region. Other settings are as in Table~\ref{tab:sim_results_p2}.}
\end{tabular}
\end{table}

Both benchmarks are efficient for the vMF family. The circular Gibbs sampler exploits a closed-form conjugate structure and the spherical sampler exploits the low-dimensional geometry of the three-dimensional model, and in single-core terms both are faster than the martingale posterior samplers. The predictive paths can be generated independently and require one synchronization step when pooled calibration is used, so this path-generation stage is parallelizable; the final pooled outputs share the estimated calibration matrix. Among the martingale posterior samplers, the hybrid schemes are considerably cheaper than MPS-R because they use a shorter explicit path, and Hybrid MPS-R is cheaper than Hybrid MPS-FP in both dimensions, with the gap widening from $p=2$ to $p=3$. The second comparison is the more relevant one for the richer directional models discussed in Section~\ref{sec:conclusion}, where the Fisher information is less readily available and pathwise preconditioning is correspondingly more costly.

\section{Real-data application: OSCAR surface currents}
\label{sec:application}
We apply the proposed martingale posterior sampler to satellite-derived ocean surface-current data from the OSCAR L4 product \citep{https://doi.org/10.5067/oscar-25i20}. The data provide gridded eastward and northward near-surface velocity components, denoted by $(u,v)$. At each valid ocean grid cell, we convert the velocity vector into a current direction $\omega=\operatorname{atan2}(v,u)$, giving circular observations on the unit circle. The analysis focuses on the California Current System (CCS), an eastern-boundary current along the west coast of North America. We consider the regional window
\[
30^\circ\mathrm{N}\le {\rm lat}\le 45^\circ\mathrm{N},
\qquad
115^\circ\mathrm{W}\le {\rm lon}\le 130^\circ\mathrm{W}.
\]
To provide a snapshot-level comparison of local directional structure, we analyze one summer date, 1 August 2024, and one winter date, 15 January 2024. These two dates are used to illustrate a summer--winter contrast within the region, rather than to estimate a full seasonal climatology.

All current directions are expressed relative to a common CCS reference orientation, $\omega_{\rm ref}=-67.5^\circ$, corresponding to a south-southeast, equatorward alongshore direction. Using the same reference orientation for both snapshots avoids introducing date-specific shifts through separate recentering. For each valid grid cell $x$, we define
\[
    \Delta\omega(x)
    =
    \operatorname{wrap}\{\omega(x)-\omega_{\rm ref}\}
    \in (-\pi,\pi],
    \qquad
    Y(x)=\{\cos\Delta\omega(x),\sin\Delta\omega(x)\}\in S^1 .
\]
The transformed observations are therefore unit vectors and are summarized using a two-dimensional vMF working model.

The full CCS window is spatially heterogeneous, and a single unimodal vMF distribution is not intended to describe the entire region. We therefore restrict the formal vMF analysis to a compact southern offshore subarea,
\[
    30.50^\circ\mathrm{N}\leq {\rm lat}\leq 32.25^\circ\mathrm{N},
    \qquad
    126.50^\circ\mathrm{W} \leq {\rm lon} \leq 128.25^\circ\mathrm{W}.
\]
This subarea contains $N=64$ common valid grid cells in both snapshots. It is sufficiently localized for a single-vMF summary to be interpretable, while still showing a visible date-to-date contrast in current orientation. The full CCS window and the selected subarea are shown in Figure~\ref{fig:oscar_compact_map}.

\begin{figure}[H]
  \centering
  \includegraphics[width=0.90\linewidth]{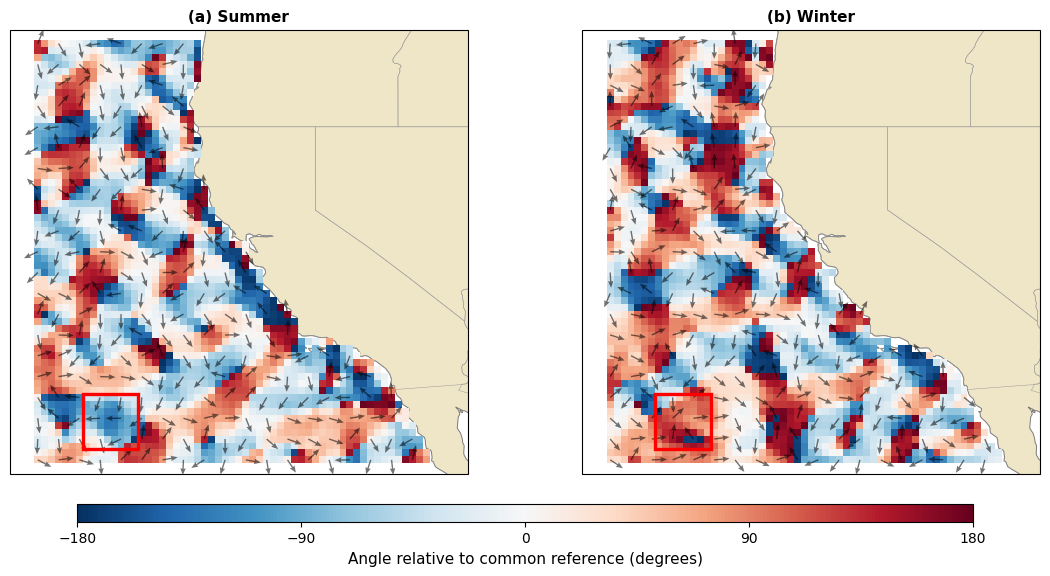}
  \captionsetup{width=0.90\linewidth, font=small}
\caption{OSCAR current directions in the California Current System window for (a) the summer snapshot and (b) the winter snapshot. Colours show angular deviation from the common reference direction $\omega_{\rm ref}=-67.5^\circ$, arrows show normalized local current directions, and the red rectangle marks the southern offshore subarea used for the single-vMF analysis.}
  \label{fig:oscar_compact_map}
\end{figure}


Within the selected subarea, the exploratory summaries suggest that the difference between the two snapshots is mainly associated with the dominant orientation, rather than with a large change in directional concentration. In both snapshots, the directions are moderately concentrated, but the summer and winter samples are centred around different angular locations relative to the common CCS reference orientation. This pattern is visible both in the transformed unit vectors and in the overlaid rose histograms in Figure~\ref{fig:oscar_compact_raw}.

\begin{figure}[H]
  \centering
  \includegraphics[width=0.9\linewidth]{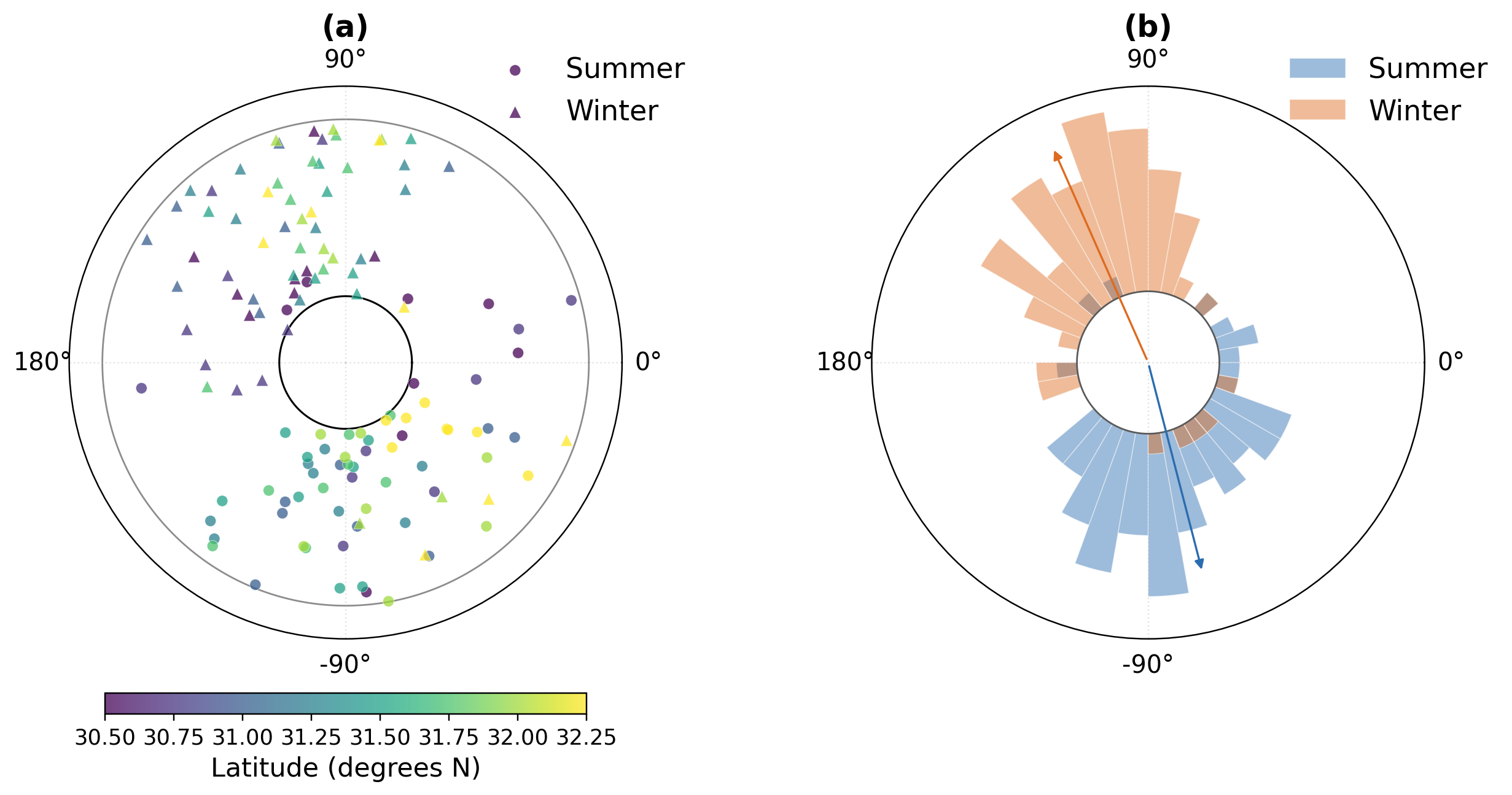}
  \captionsetup{width=0.90\linewidth, font=small}
\caption{Raw directional observations in the selected southern offshore subarea. Panel (a) shows the transformed unit vectors $Y(x)$ on the circle, with colour indicating latitude and marker type distinguishing the summer and winter snapshots. Panel (b) shows overlaid rose histograms of angular deviations from the common reference direction. Blue denotes summer and orange denotes winter.}

  \label{fig:oscar_compact_raw}
\end{figure}

We fit separate two-dimensional vMF working models to the summer and winter samples. The inferential targets are the concentration parameter $\kappa$ and the angular mean direction $\phi = \operatorname{atan2}(\mu_2,\mu_1)$, where $\phi$ is reported in degrees relative to the common CCS reference direction. All samplers operate on the natural parameter $c=\kappa\mu$ and are initialized at the MLE. We use Hybrid MPS-R with $M=120$ predictive steps and $B=800$ paths, since the Gaussian tail supplies the variance weight omitted by truncation and a short explicit path is therefore sufficient. For comparison we report MPS-R, which carries no tail correction and is run to $M=1200$ so that the residual weight is small, Hybrid MPS-FP, and the conjugate vMF MCMC posterior used in the simulation study with neutral hyperparameters $R_0=0$ and $c_0=0$. The reported credible intervals are model-based summaries under an independent single-vMF working model and do not account for spatial dependence among nearby grid cells.

Within the selected subarea, the two snapshots differ more in mean orientation than in directional concentration. Under Hybrid MPS-R, the posterior mean of $\kappa$ is 2.031 in summer and 2.117 in winter, with overlapping 95\% credible intervals, whereas the posterior mean of $\phi$ is $-75.4^\circ$ in summer and $113.8^\circ$ in winter, with well separated intervals; see Table~\ref{tab:oscar_posteriors}. The fitted mean deviations are separated by approximately $171^\circ$, while the local flow remains moderately concentrated in both snapshots. This illustrates a pronounced between-date directional contrast; two snapshots from a selected offshore subarea do not by themselves establish a seasonal reversal.

The four samplers give similar summaries. Posterior means agree to within $0.02$ for $\kappa$ and $0.6^\circ$ for $\phi$, and the credible intervals for $\phi$ have nearly identical widths under all four methods, while the martingale posterior intervals for $\kappa$ are somewhat narrower than the neutral-prior MCMC intervals, as also observed at comparable sample sizes in Section~\ref{sec:coverage}. The Gibbs sampler is the fastest in serial CPU time, requiring about $0.35$ s per $1000$ estimated effective draws, against about $3.7$ s per $1000$ generated draws for Hybrid MPS-R and $34$ s for MPS-R, whose explicit path is ten times longer. The predictive paths can be generated independently before the pooled calibration step, so this stage can be parallelized; Section~\ref{sec:simulation} discusses the comparison in more detail.

\begin{table}[H]
\centering
\small
\captionsetup{width=0.90\linewidth, font=small}
\caption{Posterior summaries for the single-vMF analysis of the selected southern offshore OSCAR subarea.}
\label{tab:oscar_posteriors}
\setlength{\tabcolsep}{5pt}

\begin{tabular}{llcccc}
\toprule
 & & \multicolumn{2}{c}{Summer, $N=64$}
   & \multicolumn{2}{c}{Winter, $N=64$} \\
\cmidrule(lr){3-4} \cmidrule(lr){5-6}
Method & Param. & Mean & 95\% CrI & Mean & 95\% CrI \\
\midrule

\multirow{2}{*}{\textbf{MCMC}}
 & $\kappa$ & 2.009 & $[1.354,\,2.831]$
 & 2.110 & $[1.428,\,2.900]$ \\
 & $\phi$ & -75.6 & $[-87.8,\,-63.7]$
 & 113.7 & $[103.1,\,125.8]$ \\
\textit{Normalized CPU time}
 & & \multicolumn{2}{c}{0.374 s}
 & \multicolumn{2}{c}{0.325 s} \\



\midrule

\multirow{2}{*}{\textbf{Hybrid MPS-FP}}
 & $\kappa$ & 2.031 & $[1.482,\,2.669]$
 & 2.099 & $[1.505,\,2.791]$ \\
 & $\phi$ & -75.2 & $[-87.0,\,-64.1]$
 & 113.8 & $[101.8,\,124.8]$ \\
\textit{Normalized CPU time}
 & & \multicolumn{2}{c}{4.986 s}
 & \multicolumn{2}{c}{5.038 s} \\

\midrule

\multirow{2}{*}{\textbf{MPS-R}}
 & $\kappa$ & 2.039 & $[1.336,\,2.611]$
 & 2.097 & $[1.432,\,2.694]$ \\
 & $\phi$ & -75.3 & $[-87.8,\,-63.9]$
 & 113.5 & $[101.5,\,125.1]$ \\
\textit{Normalized CPU time}
 & & \multicolumn{2}{c}{33.766 s}
 & \multicolumn{2}{c}{35.208 s} \\

\midrule

\multirow{2}{*}{\textbf{Hybrid MPS-R}}
 & \textbf{$\kappa$} & \textbf{2.031} & $\textbf{[1.420,\,2.585]}$
 & \textbf{2.117} & $\textbf{[1.456,\,2.730]}$ \\
 & \textbf{$\phi$} & \textbf{-75.4} & $\textbf{[-86.9,\,-63.0]}$
 & \textbf{113.8} & $\textbf{[101.2,\,124.9]}$ \\
\textit{Normalized CPU time}
 & & \multicolumn{2}{c}{3.683 s}
 & \multicolumn{2}{c}{3.669 s} \\

\bottomrule
\end{tabular}

\medskip
\begin{minipage}{0.90\linewidth}
\footnotesize
\textit{Note:}
CrI denotes the equal-tailed 95\% credible interval.
The truncated schemes use $M=1200$, whereas the hybrid schemes use
$M=120$, with $B=800$ predictive paths.
For MPS, runtime is single-core seconds per 1000 generated output draws; for
MCMC, it is seconds per 1000 estimated effective draws after a
5000-iteration burn-in. Under pooled calibration, the MPS outputs share an
estimated calibration matrix and are therefore not strictly independent.
The corresponding MLEs are
$(\hat{\kappa},\hat{\phi})=(2.017,-75.5^\circ)$ for summer and
$(2.109,113.9^\circ)$ for winter.
\end{minipage}

\end{table}

The predictive paths show how the terminal correction produces this scale. Along the raw-score path the state moves only slightly from its initialization, with $\kappa$ remaining within roughly $1.9$ to $2.3$, because the raw increments accumulate variance on the information scale. The terminal correction and the Gaussian tail rescale the draw to the inverse-information scale, after which the terminal densities are close to the neutral-prior MCMC densities for both parameters and both snapshots; see Figure~\ref{fig:oscar_compact_traces}. The pathwise view shows the same separation, with the terminal distributions for $\kappa$ centred at similar values in the two snapshots and those for $\phi$ concentrating around distinct angular regions.

\begin{figure}[H]
  \centering
  \includegraphics[width=0.90\linewidth]{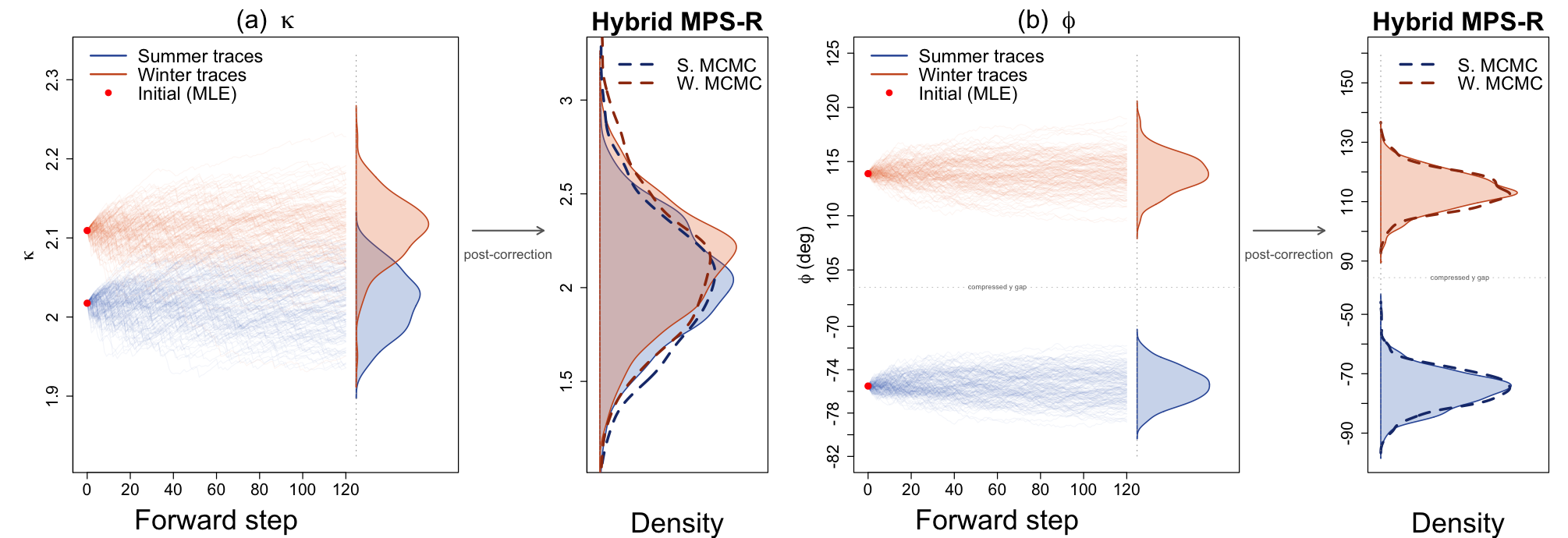}
  \captionsetup{width=0.90\linewidth, font=small}
  \caption{Hybrid MPS-R output and MCMC benchmark for the selected southern offshore OSCAR subarea. Panel (a) shows the concentration parameter $\kappa$ and panel (b) the mean direction $\phi$, in degrees relative to $\omega_{\rm ref}$. The left plots give the raw-score trajectories over $M=120$ steps from the MLE (red dot); the right plots give the terminal densities after post-correction, with the neutral-prior MCMC posteriors dashed. Blue denotes summer and orange denotes winter.}
  \label{fig:oscar_compact_traces}
\end{figure}

On the observation scale, the two fits have comparable concentration but mean directions that point to different angular regions, so the posterior bands have similar widths and are centred at different locations; see Figure~\ref{fig:oscar_compact_posterior_bands}. These summaries are consistent with Table~\ref{tab:oscar_posteriors} and Figure~\ref{fig:oscar_compact_traces}.

\begin{figure}[H]
  \centering
  \includegraphics[width=0.9\linewidth]{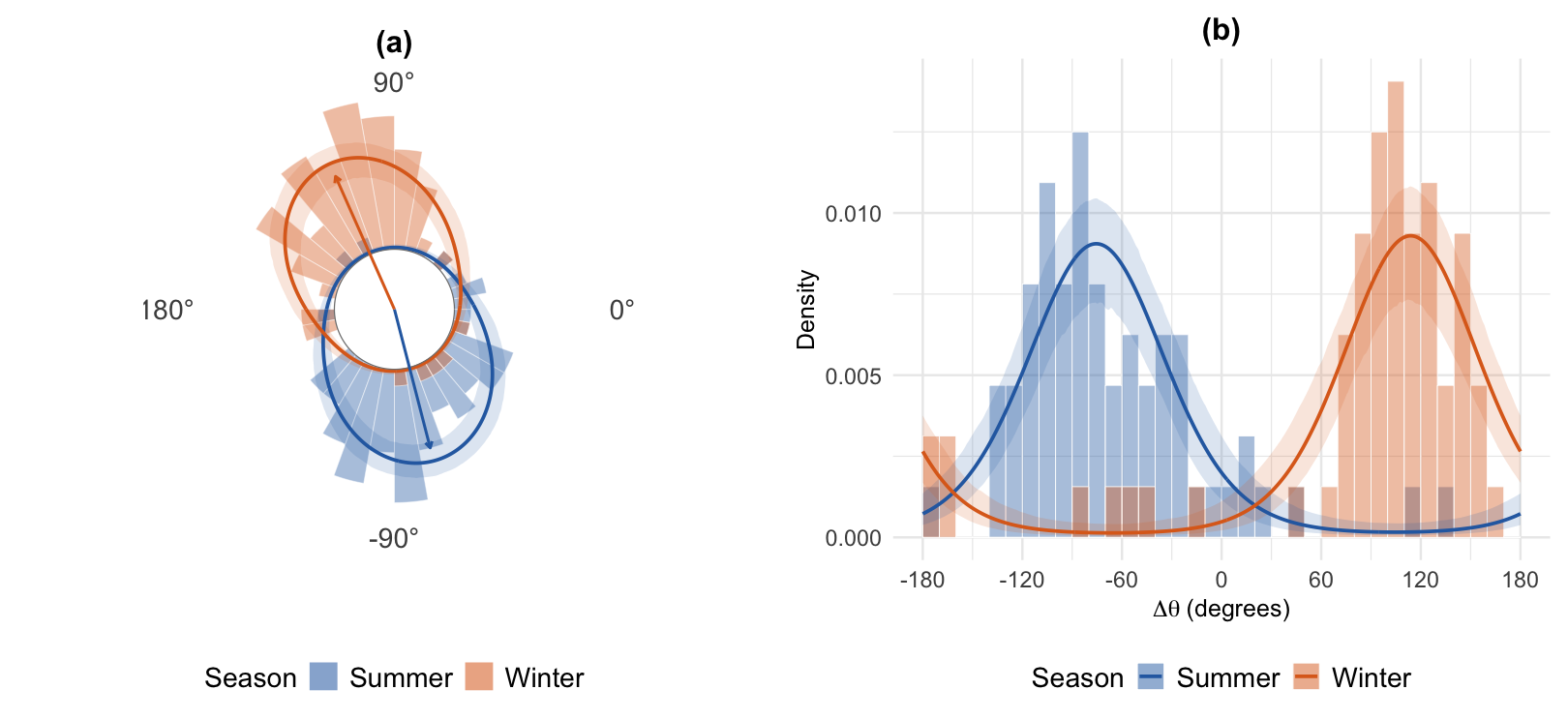}
  \captionsetup{width=0.90\linewidth, font=small}
  \caption{Hybrid MPS-R vMF fit for the selected southern offshore OSCAR subarea. Panel (a) overlays the seasonal rose histograms, fitted mean directions, fitted vMF densities, and $95\%$ posterior bands on the polar scale. Panel (b) shows the corresponding density view for $\Delta\omega$, in degrees relative to $\omega_{\rm ref}$.}
  \label{fig:oscar_compact_posterior_bands}
\end{figure}

Overall, this application illustrates martingale posterior uncertainty quantification for local directional summaries derived from a geophysical velocity field. In the selected southern offshore CCS subarea, the two OSCAR snapshots have comparable fitted directional concentration but different fitted mean orientations, and the analysis separates uncertainty in the concentration parameter $\kappa$ from uncertainty in the local mean direction $\phi$. These conclusions should be interpreted as model-based summaries for the selected subarea and dates, rather than as a full seasonal analysis of the California Current System.

\section{Concluding Remarks}
\label{sec:conclusion} 

This paper has studied a raw-score martingale posterior sampler for von Mises--Fisher models. The construction separates martingale simulation from covariance calibration: the predictive path uses only mean-zero score increments, while the inverse-information scale required for Bernstein--von Mises calibration is imposed through a terminal linear correction. A hybrid version further replaces the unresolved part of the infinite predictive continuation by a Gaussian tail with matching leading-order variance.

The vMF model provides a transparent setting for developing this construction, since its natural-parameter score is bounded and its Fisher information is explicit. This also means that the basic vMF model is not the setting in which avoiding pathwise Fisher preconditioning is most computationally consequential. Its role here is instead to provide a tractable benchmark in which the martingale property, finite-horizon variance loss, hybrid tail correction, and terminal covariance calibration can be examined directly.

The separation between predictive simulation and covariance calibration may be more useful in richer directional models. In finite or nonparametric mixtures of vMF components, information-based preconditioning may involve component allocations, mixture weights, cross-component dependence, and label structure. In rotational or frame-valued models, such as matrix Langevin models on the Stiefel manifold, normalizing constants, parameter constraints, and information geometry are more involved. In these settings, a score-only predictive path with terminal calibration may reduce the need for repeated analytic information calculations along the simulated path.

The real-data analysis should also be interpreted within the limits of the working model. The OSCAR application used an independent single-vMF summary on a localized subarea and two selected dates. The resulting credible intervals are therefore model-based uncertainty summaries for local directional structure, not a full spatial or seasonal analysis. Extending the predictive law to account for spatial dependence, temporal dependence, or latent regime structure is a natural direction for future work.

The proposed sampler is closely related to, but distinct from, classical maximum likelihood inference. We do not propose a new point estimator for the vMF natural parameter; throughout the construction, the predictive recursion is initialized at the MLE. In regular parametric models, Bernstein--von Mises theory implies that a well-calibrated posterior uncertainty distribution should be centered at an efficient estimator, such as the MLE, and should have inverse-Fisher covariance at first order. Thus, agreement with MLE asymptotics is a calibration requirement for the proposed uncertainty distribution. 

Several qualifications delimit this calibration claim. The raw-score recursion is defined in the chosen natural parameterization and is not invariant under nonlinear reparameterization, so finite-sample post-corrected distributions may depend on that choice even though smooth functionals inherit the first-order limit by the delta method. The current covariance argument also uses the model information identity and assumes correct specification; under misspecification, a sandwich-type correction involving both the score variance and expected Hessian would generally be required. Finally, when calibration is estimated from the same pooled paths used as output, the returned draws share a random matrix and are not strictly independent.

Overall, the proposed method provides a prior-free predictive sampler for uncertainty quantification. It generates predictive uncertainty through forward score updates, without requiring MCMC transitions or repeated numerical optimization along the simulated path. The terminal post-correction and, when needed, the hybrid Gaussian tail recover the inverse-Fisher covariance scale required by Bernstein--von Mises calibration. While the vMF model offers a tractable benchmark, the broader motivation is the possibility of using score-based predictive simulation with terminal covariance calibration in directional models where pathwise Fisher preconditioning is less convenient.

\newpage 
\bibliographystyle{chicago}
\bibliography{references}

\newpage 
\appendix
\section{Appendix}
\label{sec:appendix} 
\subsection{Auxiliary Results}
\label{app:auxiliary_results}

We first record two elementary approximation tools used repeatedly in the proofs. Throughout the Appendix, \(C_p\) denotes a finite positive constant depending only on the dimension \(p\). Its value may change from line to line.

\begin{lemma}[Gaussian covariance perturbation]
\label{lem:gaussian_cov_perturb}
Let \(G_0\sim N_p(0,\Sigma_0)\) and \(G_1\sim N_p(0,\Sigma_1)\), where \(\Sigma_0\) and \(\Sigma_1\) are symmetric nonnegative definite matrices. Then, for every \(\varphi\in C_b^2(\mathbb R^p)\),
\[
    \left|
    \mathbb E\{\varphi(G_1)\}
    -
    \mathbb E\{\varphi(G_0)\}
    \right|
    \le
    C_p
    \|D^2\varphi\|_\infty
    \|\Sigma_1-\Sigma_0\|_{\mathrm{op}} .
\]
\end{lemma}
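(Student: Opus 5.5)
The approach is Gaussian interpolation (the ``smart path'' argument) between the two covariance matrices, followed by Gaussian integration by parts. Take $G_0$ and $G_1$ independent. For $t\in[0,1]$, define the interpolation
\[
G_t=\sqrt{t}\,G_1+\sqrt{1-t}\,G_0\sim N_p\bigl(0,\,t\Sigma_1+(1-t)\Sigma_0\bigr),
\]
and set $h(t)=\mathbb E\{\varphi(G_t)\}$. Then $\mathbb E\{\varphi(G_1)\}-\mathbb E\{\varphi(G_0)\}=h(1)-h(0)=\int_0^1 h'(t)\,dt$, so it suffices to bound $|h'(t)|$ uniformly in $t$.

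First, differentiate under the expectation for $t\in(0,1)$. This gives
\[
h'(t)=\mathbb E\Bigl[\nabla\varphi(G_t)^\top\Bigl\{\tfrac{1}{2\sqrt t}G_1-\tfrac{1}{2\sqrt{1-t}}G_0\Bigr\}\Bigr].
\]
Next, apply Stein's identity $\mathbb E\{G\,g(G)\}=\Sigma\,\mathbb E\{\nabla g(G)\}$ separately to $G_1$ (conditionally on $G_0$) and to $G_0$ (conditionally on $G_1$). The chain-rule factors $\sqrt t$ and $\sqrt{1-t}$ cancel the prefactors, yielding the standard formula
\[
h'(t)=\tfrac12\,\mathbb E\bigl[\operatorname{tr}\{(\Sigma_1-\Sigma_0)D^2\varphi(G_t)\}\bigr].
\]
Third, bound the trace. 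Writing $\Delta=\Sigma_1-\Sigma_0$, which is symmetric, we have
\[
|\operatorname{tr}(\Delta H)|\le \|\Delta\|_{*}\|H\|_{\mathrm{op}}\le p\,\|\Delta\|_{\mathrm{op}}\|H\|_{\mathrm{op}},
\]
using the nuclear-norm/operator-norm duality. With $\|D^2\varphi\|_\infty$ interpreted as the supremum of the operator norm of the Hessian (any other matrix norm changes only $C_p$), this gives $|h'(t)|\le \tfrac p2\|D^2\varphi\|_\infty\|\Delta\|_{\mathrm{op}}$. Integrating over $t\in[0,1]$ yields the claim with $C_p=p/2$.

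The only delicate step is justifying differentiation under the integral and the Stein identity near the endpoints $t=0,1$, where the prefactors $1/\sqrt t$ and $1/\sqrt{1-t}$ blow up, and when the $\Sigma_i$ are singular. I would handle this by working on $(0,1)$, where integrability holds because $\nabla\varphi$ is bounded and the Gaussians have all moments; after the Stein step the derivative formula is bounded uniformly in $t$, so $h$ is Lipschitz on $[0,1]$ and the fundamental theorem of calculus applies. Stein's identity holds for singular covariances as well, since one can write $G=\Sigma^{1/2}N$ with $N$ standard normal and apply the standard-normal identity to $N$. If needed, one could alternatively replace $\Sigma_i$ by $\Sigma_i+\epsilon I_p$ and let $\epsilon\to0$ using bounded convergence.
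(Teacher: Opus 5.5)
Your proposal is correct and follows essentially the same route as the paper: interpolate along $\Sigma_t=(1-t)\Sigma_0+t\Sigma_1$, use the identity $h'(t)=\tfrac12\mathbb E[\operatorname{tr}\{(\Sigma_1-\Sigma_0)D^2\varphi(G_t)\}]$, bound the trace, and integrate over $t\in[0,1]$. The only differences are that you derive this identity explicitly through the coupling $G_t=\sqrt t\,G_1+\sqrt{1-t}\,G_0$ and Stein's identity, which also covers singular covariances directly where the paper instead regularizes by $\eta I_p$, and that you make the constant explicit as $C_p=p/2$.
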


\begin{proof}
Let \(\Sigma_t=(1-t)\Sigma_0+t\Sigma_1\), \(0\le t\le1\), and let \(G_t\sim N_p(0,\Sigma_t)\). By the Gaussian interpolation identity, \(\frac{d}{dt}\mathbb E\{\varphi(G_t)\}=\frac12\mathbb E[\operatorname{tr}\{(\Sigma_1-\Sigma_0)D^2\varphi(G_t)\}]\).
Hence
\[
\begin{aligned}
    \left|
    \mathbb E\{\varphi(G_1)\}
    -
    \mathbb E\{\varphi(G_0)\}
    \right|
    &\le
    \frac12
    \int_0^1
    \mathbb E
    \left|
        \operatorname{tr}
        \{(\Sigma_1-\Sigma_0)D^2\varphi(G_t)\}
    \right|
    dt  \\
    &\le
    C_p
    \|D^2\varphi\|_\infty
    \|\Sigma_1-\Sigma_0\|_{\mathrm{op}} .
\end{aligned}
\]
The result follows. If one of the covariance matrices is singular, the same argument follows by adding \(\eta I_p\) and sending \(\eta\downarrow0\).
\end{proof}

\begin{lemma}[Smooth Gaussian approximation for the predictive tail]
\label{lem:tail_gaussian}
Fix \(n\) and \(M\). Work conditionally on \(\mathcal F_{n,M}\), and suppose that, under this conditional predictive law, the continuation after time \(M\) remains almost surely in a compact set \(K\) on which Assumption \ref{assum:hybrid_regularity} holds. Let \(T_{n,M}=\sum_{m=M+1}^{\infty}\gamma_ms_{n,m}\), \(r_{n,M}=\sum_{m=M+1}^{\infty}\gamma_m^2\), and \(H_M=I(\theta_{n,M})\).
Let \(G_M\sim N_p(0,r_{n,M}H_M)\). Then, for every \(f\in C_b^3(\mathbb R^p)\),
\[
\begin{aligned}
    \left|
    \mathbb E\{f(T_{n,M})\mid \mathcal F_{n,M}\}
    -
    \mathbb E\{f(G_M)\mid \mathcal F_{n,M}\}
    \right|
    \le
    C_p
    \left\{
        \|D^2 f\|_\infty L_KS_K r_{n,M}^{3/2}
        +
        \|D^3 f\|_\infty S_K^3 A_{n,M}
    \right\},
\end{aligned}
\]
where \(A_{n,M}=\sum_{m=M+1}^{\infty}\gamma_{m}^3\).
\end{lemma}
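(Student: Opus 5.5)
The proof will be a martingale Lindeberg (Stein-type) replacement carried out conditionally on \(\mathcal F_{n,M}\). Write \(Y_m=\gamma_m s_{n,m}\) for \(m>M\), and let \(\xi_{M+1},\xi_{M+2},\ldots\) be Gaussian vectors, independent of everything else and of each other, with \(\xi_m\sim N_p(0,\gamma_m^2H_M)\). Then \(\sum_{m>M}\xi_m\sim G_M\).

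\textbf{Truncation.} Work first with finite sums \(\sum_{m=M+1}^{N}\) and pass to \(N\to\infty\) at the end. This is justified because \(T_{n,M}^{(N)}\to T_{n,M}\) in \(L^2\) by Theorem~\ref{thm:as_convergence}, the partial Gaussian sums converge similarly, and \(f\) is bounded and Lipschitz.

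\textbf{Interpolating hybrids.} For \(M< k\le N\), define the hybrid
\[
U_k=\sum_{M<m<k}Y_m+\sum_{k<m\le N}\xi_m .
\]
Telescoping gives
\[
\mathbb E f\Bigl(\sum Y\Bigr)-\mathbb E f\Bigl(\sum\xi\Bigr)=\sum_k\bigl\{\mathbb E f(U_k+Y_k)-\mathbb E f(U_k+\xi_k)\bigr\}.
\]
Taylor-expand \(f\) around \(U_k\) to third order.

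\textbf{First-order terms.} These vanish. For \(Y_k\), the future Gaussians are independent and \(\mathbb E(s_{n,k}\mid\mathcal F_{n,k-1})=0\) by Assumption~\ref{assum:hybrid_regularity}(i). One must condition on \(\mathcal F_{n,k-1}\vee\sigma(\xi)\). This is legitimate because the \(\xi\)'s are independent of the predictive path, so \(Y_k\) remains a martingale difference with respect to that enlarged filtration.

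\textbf{Second-order terms.} They equal \(\tfrac12\mathbb E\operatorname{tr}[D^2f(U_k)\gamma_k^2\{I(\theta_{n,k-1})-H_M\}]\). Using Assumption~\ref{assum:hybrid_regularity}(iii), each is bounded by \(C_p\|D^2f\|_\infty\gamma_k^2L_K\,\mathbb E\|\theta_{n,k-1}-\theta_{n,M}\|\). By martingale orthogonality and (ii),
\[
\mathbb E\|\theta_{n,k-1}-\theta_{n,M}\|\le\Bigl(S_K^2\sum_{M<m<k}\gamma_m^2\Bigr)^{1/2}\le S_K r_{n,M}^{1/2}.
\]
Summing \(\gamma_k^2\) over \(k\) gives the term \(L_KS_Kr_{n,M}^{3/2}\).

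\textbf{Third-order remainders.} These are bounded by \(\|D^3f\|_\infty\) times \(\mathbb E\|Y_k\|^3+\mathbb E\|\xi_k\|^3\). Here \(\|Y_k\|^3\le S_K^3\gamma_k^3\), and \(\mathbb E\|\xi_k\|^3\le C_p\gamma_k^3\|H_M\|_{\mathrm{op}}^{3/2}\le C_p\gamma_k^3S_K^3\), since \(\|I(\theta)\|_{\mathrm{op}}\le \sup\|s\|^2\le S_K^2\). Summing over \(k\) gives \(S_K^3A_{n,M}\).

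\textbf{Main obstacle.} The hardest step is the bookkeeping of the second-order term. The difference \(I(\theta_{n,k-1})-H_M\) is random and correlated with \(U_k\), so the conditional expectations must be taken in the right order. The plan is to first condition on \(\mathcal F_{n,k-1}\vee\sigma(\xi_m:m>k)\), under which \(U_k\) is measurable. At that point, and only then, replace the conditional second moment \(\mathbb E(Y_kY_k^\top\mid\cdot)=\gamma_k^2I(\theta_{n,k-1})\) using the information identity. The localization hypothesis, that the path stays in \(K\) almost surely, ensures (ii) and (iii) apply at every step.
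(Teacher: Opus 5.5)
Your proposal is correct and follows essentially the same route as the paper. The paper also truncates, runs a martingale Lindeberg replacement conditionally on $\mathcal F_{n,M}$, and controls the covariance-freezing term $\gamma_k^2\{I(\theta_{n,k-1})-H_M\}$ by Lipschitz continuity plus martingale orthogonality to get $L_KS_Kr_{n,M}^{3/2}$. It bounds the third-order remainders by $S_K^3A_{n,M}$ via $\|H_M\|_{\mathrm{op}}\le S_K^2$ and then lets the truncation level go to infinity.

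Your explicit conditioning on $\mathcal F_{n,k-1}\vee\sigma(\xi_m:m>k)$ only spells out a step the paper leaves implicit. Also, the $L^2$ convergence of the truncated tail follows directly from the bounded score and $\sum_m\gamma_m^2<\infty$. You therefore do not need to invoke Theorem~\ref{thm:as_convergence}, whose hypotheses include an extra condition on $Q_{n,\infty}$.
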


\begin{proof}
For \(L>M\), define \(T_{n,M,L}=\sum_{m=M+1}^L \gamma_{m}s_{n,m}\) and
\(r_{n,M,L}=\sum_{m=M+1}^L \gamma_{m}^2\). A standard Lindeberg
replacement argument for martingale differences compares \(T_{n,M,L}\) with a
Gaussian vector having covariance \(r_{n,M,L}H_M\).
At the \(m\)-th replacement step, the difference between the conditional covariance of the martingale increment and the frozen covariance is \(\gamma_m^2\{I(\theta_{n,m-1})-H_M\}\). Taylor expansion to second order therefore bounds the cumulative covariance-freezing contribution by \(C_p\|D^2f\|_\infty\sum_{m=M+1}^L\gamma_m^2\mathbb E[\|I(\theta_{n,m-1})-H_M\|_{\mathrm{op}}\mid\mathcal F_{n,M}]\).
By Lipschitz continuity of \(I(\cdot)\) on \(K\),
\(\|I(\theta_{n,m-1})-H_M\|_{\mathrm{op}}
\le L_K\|\theta_{n,m-1}-\theta_{n,M}\|\). Moreover, since the score is
bounded by \(S_K\) on \(K\),
\[
\begin{aligned}
    \mathbb E
    \left[
        \|\theta_{n,m-1}-\theta_{n,M}\|^2
        \mid \mathcal F_{n,M}
    \right]
    &=
    \mathbb E
    \left[
        \left\|
        \sum_{j=M+1}^{m-1}
        \gamma_{j}s_{n,j}
        \right\|^2
        \mid \mathcal F_{n,M}
    \right]  \\
    &\le
    S_K^2
    \sum_{j=M+1}^{m-1}\gamma_{j}^2
    \le
    S_K^2 r_{n,M}.
\end{aligned}
\]
Thus \(\sum_{m=M+1}^L\gamma_m^2\mathbb E[\|I(\theta_{n,m-1})-H_M\|_{\mathrm{op}}\mid\mathcal F_{n,M}]\le L_KS_Kr_{n,M}^{3/2}\).

The third-order Taylor remainders are bounded by \(C_p\|D^3f\|_\infty\sum_{m=M+1}^L\gamma_m^3\mathbb E[\|s_{n,m}\|^3\mid\mathcal F_{n,m-1}]\le C_p\|D^3f\|_\infty S_K^3A_{n,M}\).
The analogous Gaussian replacement remainders satisfy the same bound because
\(\|H_M\|_{\mathrm{op}}\le S_K^2\). Letting \(L\to\infty\) gives the stated result by dominated convergence.
\end{proof}

\begin{proposition}[Consistency of the pooled quadratic-variation estimator]
\label{prop:pooled_information_consistency}
Suppose Assumption \ref{assum:hybrid_regularity} holds on a compact set \(K\), and suppose the first \(M\) predictive steps of all \(B\) chains remain in \(K\). Let \(\widehat I_{n,M,B}^{\mathrm{pool}}=B^{-1}\sum_{b=1}^BQ_{n,M}^{(b)}/w_{n,M}\), where \(w_{n,M}=\sum_{m=1}^M\gamma_m^2\) and \(N_{\mathrm{eff}}(n,M)=w_{n,M}^2/\sum_{m=1}^M\gamma_m^4\) is the effective number of weighted score observations.
Then
\[
    \left\|
    \widehat I_{n,M,B}^{\mathrm{pool}}
    -
    I(\theta_{n,0})
    \right\|_{\mathrm{op}}
    =
    \mathcal O_p
    \left[
        \sqrt{\frac{\log p}{B\,N_{\mathrm{eff}}(n,M)}}
        +
        \sqrt{w_{n,M}}
    \right],
\]
up to constants depending on \(S_K\) and \(L_K\). Consequently, if \(B\,N_{\mathrm{eff}}(n,M)\to\infty\) and \(w_{n,M}\to0\), then \(\widehat I_{n,M,B}^{\mathrm{pool}}\xrightarrow{p}I(\theta_{n,0})\).
In particular, for fixed \(M\), \(N_{\mathrm{eff}}(n,M)\asymp M\) and \(w_{n,M}\asymp M/n^2\); hence consistency follows if \(B\to\infty\).
\end{proposition}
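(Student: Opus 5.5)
The plan is to split the error into a stochastic fluctuation term and a drift (bias) term:
\[
\widehat I_{n,M,B}^{\mathrm{pool}}-I(\theta_{n,0})
=
\underbrace{\frac{1}{Bw_{n,M}}\sum_{b=1}^B\sum_{m=1}^M\gamma_m^2\bigl\{s_{n,m}^{(b)}s_{n,m}^{(b)\top}-I(\theta^{(b)}_{n,m-1})\bigr\}}_{E_1}
+
\underbrace{\frac{1}{Bw_{n,M}}\sum_{b=1}^B\sum_{m=1}^M\gamma_m^2\bigl\{I(\theta^{(b)}_{n,m-1})-I(\theta_{n,0})\bigr\}}_{E_2}.
\]
The information identity together with Assumption~\ref{assum:hybrid_regularity}(i) gives \(\mathbb E(s_{n,m}s_{n,m}^\top\mid\mathcal F_{n,m-1})=I(\theta_{n,m-1})\). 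Hence the summands \(\xi_{b,m}=\gamma_m^2\{s^{(b)}_{n,m}s^{(b)\top}_{n,m}-I(\theta^{(b)}_{n,m-1})\}/(Bw_{n,M})\) are matrix martingale differences. Order them chain by chain, which is legitimate because the chains are independent. Their operator norm is bounded by \(2S_K^2\gamma_m^2/(Bw_{n,M})\), using that both \(ss^\top\) and \(I=\mathbb E ss^\top\) have norm at most \(S_K^2\) on \(K\).

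For the fluctuation term, I will apply the matrix Freedman (or matrix Azuma) inequality. The predictable quadratic variation is bounded deterministically by
\[
\sigma^2\le C S_K^4\sum_{b,m}\frac{\gamma_m^4}{B^2w_{n,M}^2}=\frac{CS_K^4}{B\,N_{\mathrm{eff}}(n,M)},
\]
and the uniform bound on each summand is \(R=2S_K^2\gamma_1^2/(Bw_{n,M})\). Freedman's inequality then gives
\[
\|E_1\|_{\mathrm{op}}=\mathcal O_p\bigl(\sigma\sqrt{\log p}+R\log p\bigr).
\]
Since \(\gamma_1^2/w_{n,M}\le(\gamma_1^4/\sum\gamma_m^4)^{1/2}\cdot(\sum\gamma_m^4)^{1/2}/w_{n,M}\), and for \(\gamma_m=(n+m)^{-1}\) with \(M\ge1\) the ratio \(\gamma_1^4/\sum_{m\le M}\gamma_m^4\) is at most one, we get \(R\lesssim (B\,N_{\mathrm{eff}})^{-1/2}\cdot B^{-1/2}\). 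The \(R\log p\) term is therefore absorbed into the stated rate up to constants, which are allowed to depend on \(S_K\) and, implicitly, on \(p\). As a simpler fallback, an \(L^2\) bound via orthogonality of martingale differences, \(\mathbb E\|E_1\|_F^2\le C_pS_K^4/(BN_{\mathrm{eff}})\), already yields the rate up to a dimension constant. The \(\sqrt{\log p}\) factor is the refinement that matrix concentration provides.

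For the drift term, Lipschitz continuity on \(K\) (Assumption~\ref{assum:hybrid_regularity}(iii)) gives
\[
\|E_2\|_{\mathrm{op}}\le \frac{L_K}{Bw_{n,M}}\sum_{b,m}\gamma_m^2\|\theta^{(b)}_{n,m-1}-\theta_{n,0}\|.
\]
By martingale orthogonality and bounded scores, \(\mathbb E\|\theta^{(b)}_{n,m-1}-\theta_{n,0}\|^2\le S_K^2w_{n,m-1}\le S_K^2w_{n,M}\), so Jensen yields \(\mathbb E\|E_2\|_{\mathrm{op}}\le L_KS_K\sqrt{w_{n,M}}\), and Markov's inequality converts this to \(\mathcal O_p(\sqrt{w_{n,M}})\). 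Localization is exactly what the hypothesis supplies: the paths stay in \(K\), so both the score bound and the Lipschitz bound apply along them.

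The consistency claims then follow directly from the rate. For fixed \(M\), \(\gamma_m\asymp n^{-1}\) for every \(m\le M\), so \(w_{n,M}\asymp M/n^2\), \(\sum_{m\le M}\gamma_m^4\asymp M/n^4\), and \(N_{\mathrm{eff}}\asymp M\). The rate becomes \((\log p/(BM))^{1/2}+\sqrt M/n\), which tends to zero as \(B\to\infty\).

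The main obstacle is the matrix concentration step. It requires checking carefully that the pooled sum, ordered across chains, is a genuine martingale with deterministically bounded predictable variation, and tracking the Freedman range term. If that becomes delicate, I will settle for the Frobenius \(L^2\) bound, which gives the rate with \(C_p\) in place of \(\sqrt{\log p}\).
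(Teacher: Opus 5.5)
Your proposal is correct and follows the paper's proof essentially step for step. It uses the same split into a martingale-fluctuation term, handled by matrix Freedman with variance proxy $S_K^4/(B\,N_{\mathrm{eff}}(n,M))$, and a path-drift term, handled by Lipschitz continuity, martingale orthogonality, and Jensen/Markov. Two of your additions make precise points the paper only asserts: the chain-by-chain filtration, and the explicit bound $R\lesssim S_K^2(B\,N_{\mathrm{eff}})^{-1/2}B^{-1/2}$ showing the Freedman range term is absorbed. One caveat: in the fixed-$M$ consequence, the $\sqrt{M}/n$ drift term vanishes only because $n\to\infty$ as well, not from $B\to\infty$ alone.
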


\begin{proof}
Write \(\alpha_{n,m}=\gamma_m^2/w_{n,M}\), so that \(\sum_{m=1}^M\alpha_{n,m}=1\). Then \(\widehat I_{n,M,B}^{\mathrm{pool}}-I(\theta_{n,0})=R_{n,M,B}+B_{n,M,B}\), where \(R_{n,M,B}=B^{-1}\sum_{b=1}^B\sum_{m=1}^M\alpha_{n,m}[s_{n,m}^{(b)}\{s_{n,m}^{(b)}\}^\top-I(\theta_{n,m-1}^{(b)})]\) is the martingale fluctuation term and \(B_{n,M,B}=B^{-1}\sum_{b=1}^B\sum_{m=1}^M\alpha_{n,m}\{I(\theta_{n,m-1}^{(b)})-I(\theta_{n,0})\}\) is the path-drift bias.

For \(R_{n,M,B}\), the summands are conditionally mean-zero self-adjoint matrix martingale differences. Since \(\|s_{n,m}^{(b)}\|\le S_K\), their operator norms are bounded by a constant multiple of \(S_K^2\alpha_{n,m}/B\), and the matrix variance proxy is \(S_K^4B^{-1}\sum_{m=1}^M\alpha_{n,m}^2=S_K^4/[B\,N_{\mathrm{eff}}(n,M)]\). A standard matrix Freedman inequality therefore gives \(\|R_{n,M,B}\|_{\mathrm{op}}=\mathcal O_p[S_K^2\sqrt{\log p/\{B\,N_{\mathrm{eff}}(n,M)\}}]\), with the usual lower-order linear term absorbed in the rate.

For \(B_{n,M,B}\), Lipschitz continuity gives \(\|B_{n,M,B}\|_{\mathrm{op}}\le L_KB^{-1}\sum_{b=1}^B\sum_{m=1}^M\alpha_{n,m}\|\theta_{n,m-1}^{(b)}-\theta_{n,0}\|\). For each chain, \(\theta_{n,m-1}^{(b)}-\theta_{n,0}=\sum_{j=1}^{m-1}\gamma_js_{n,j}^{(b)}\), and martingale orthogonality yields \(\mathbb E\|\theta_{n,m-1}^{(b)}-\theta_{n,0}\|^2\le S_K^2\sum_{j=1}^{m-1}\gamma_j^2\le S_K^2w_{n,M}\). Hence \(\|B_{n,M,B}\|_{\mathrm{op}}=\mathcal O_p(L_KS_K\sqrt{w_{n,M}})\).
Combining the two bounds proves the result.
\end{proof}

\subsection{Proofs of Section 3.1}
\label{app:proof_thm1}

\paragraph{Positive definiteness of \(Q_{n,\infty}\) for vMF.}
It remains to show that the limiting quadratic variation is nonsingular. 
Fix a proper linear subspace \(V\subsetneq\mathbb R^p\). For any finite natural parameter \(c\), \(\mathbb P_c\{s(X,c)\in V\}=\mathbb P_c\{X-\mathbb E_c(X)\in V\}=\mathbb P_c\{X\in\mathbb E_c(X)+V\}=0\).
Indeed, \(\mathbb E_c(X)+V\) is a proper affine subspace of \(\mathbb R^p\), so
its intersection with \(S^{p-1}\) has surface measure zero; the vMF density is
strictly positive with respect to surface measure.

Let \(V_m=\operatorname{span}(s_{n,1},\ldots,s_{n,m})\), with \(V_0=\{0\}\).
If \(V_{m-1}\neq \mathbb R^p\), then conditional on \(\mathcal F_{n,m-1}\),
\(V_{m-1}\) is a fixed proper subspace and \(c_{n,m-1}\) is finite. Hence \(\mathbb P(s_{n,m}\in V_{m-1}\mid\mathcal F_{n,m-1})=0\). Thus, until the span reaches \(\mathbb R^p\), each new score increases the dimension by one almost surely; consequently \(\operatorname{span}(s_{n,1},\ldots,s_{n,p})=\mathbb R^p\) a.s.
Therefore, for any nonzero \(a\in\mathbb R^p\), at least one of \(a^\top s_{n,1},\ldots,a^\top s_{n,p}\) is nonzero, and hence \(a^\top Q_{n,\infty}a=\sum_{m=1}^{\infty}\gamma_m^2(a^\top s_{n,m})^2\ge\sum_{m=1}^p\gamma_m^2(a^\top s_{n,m})^2>0\) a.s.
Thus \(Q_{n,\infty}\) is positive definite almost surely.

\begin{proof}[Proof (Theorem \ref{thm:as_convergence})]
Fix \(n\) and work conditionally on the observed data \(X_{1:n}\). Define \(Y_{n,m}=\gamma_ms_{n,m}\) and \(D_{n,M}=\sum_{m=1}^MY_{n,m}\). By Assumption \ref{assum:martingale_convergence}(i), \(\mathbb E(Y_{n,m}\mid\mathcal F_{n,m-1})=0\), so \((D_{n,M},\mathcal F_{n,M})_{M\ge1}\) is a martingale.

For \(L>M\), martingale orthogonality gives
\[
\begin{aligned}
    \mathbb E
    \left[
        \|D_{n,L}-D_{n,M}\|^2
        \mid \mathcal F_{n,0}
    \right]
    &=
    \mathbb E
    \left[
        \sum_{m=M+1}^L
        \|Y_{n,m}\|^2
        \mid \mathcal F_{n,0}
    \right]   \\
    &=
    \mathbb E
    \left[
        \sum_{m=M+1}^L
        \gamma_{m}^2
        \mathbb E\{\|s_{n,m}\|^2\mid \mathcal F_{n,m-1}\}
        \mid \mathcal F_{n,0}
    \right].
\end{aligned}
\]
By Assumption \ref{assum:martingale_convergence}(ii) and conditional monotone convergence, the right-hand side tends to zero as \(M,L\to\infty\). Hence \((D_{n,M})_{M\ge1}\) is Cauchy in conditional \(L^2\), and therefore converges in conditional \(L^2\) to a finite random variable \(D_{n,\infty}\). The martingale convergence theorem also yields \(D_{n,M}\to D_{n,\infty}\) a.s.; since \(\theta_{n,M}=\theta_{n,0}+D_{n,M}\), we obtain \(\theta_{n,M}\to\theta_{n,\infty}:=\theta_{n,0}+D_{n,\infty}\) a.s.

Next, by Assumption \ref{assum:martingale_convergence}(iii), \(Q_{n,M}\to Q_{n,\infty}\) a.s., where \(Q_{n,\infty}\) is finite and positive definite. Since \(w_{n,M}=\sum_{m=1}^M\gamma_m^2\to w_{n,\infty}=\sum_{m=1}^\infty\gamma_m^2\in(0,\infty)\), it follows that \(\widehat I_{n,M}^{\mathrm{path}}=Q_{n,M}/w_{n,M}\to\widehat I_{n,\infty}^{\mathrm{path}}=Q_{n,\infty}/w_{n,\infty}\) a.s. The limiting matrix \(\widehat I_{n,\infty}^{\mathrm{path}}\) is positive definite; continuity of matrix inversion therefore gives \((\widehat I_{n,M}^{\mathrm{path}})^{-1}\to(\widehat I_{n,\infty}^{\mathrm{path}})^{-1}\) a.s.
Combining this convergence with \(D_{n,M}\to D_{n,\infty}\) gives
\[
\begin{aligned}
    \theta_{n,M}^{\mathrm{pc}}
    &=
    \theta_{n,0}
    +
    \left(\widehat I_{n,M}^{\mathrm{path}}\right)^{-1}D_{n,M}  \\
    &\to
    \theta_{n,0}
    +
    \left(\widehat I_{n,\infty}^{\mathrm{path}}\right)^{-1}D_{n,\infty}
    =
    \theta_{n,\infty}^{\mathrm{pc}}
    \qquad \text{a.s.}
\end{aligned}
\]
This proves the theorem.
\end{proof}

\subsection{Proof of Theorem \ref{thm:hybrid_bound}}
\label{app:proof_thm2}

\begin{proof}
We give the proof for one predictive chain. The same argument applies conditionally on any larger sigma-field containing the first \(M\) steps and the calibration estimator; this covers both the pathwise and pooled calibration choices in Algorithm \ref{alg:vmf_mps}.

Let \(H_M=I(\theta_{n,M})\) and \(\widehat I=\widehat I_{n,M}\). Under the conditional localization assumption in the theorem, the predictive continuation remains in \(K\) almost surely and \(\|\widehat I-H_M\|_{\mathrm{op}}\le\varepsilon_{n,M}\), \(\lambda_{\min}(\widehat I)\ge\lambda_K/2\), and \(\|\widehat I^{-1}\|_{\mathrm{op}}\le2/\lambda_K\).

Define \(f(y)=\varphi[\theta_{n,0}+\widehat I^{-1}\{D_{n,M}+y\}]\) for \(y\in\mathbb R^p\). Then \(\varphi(\theta_{n,M}^{\mathrm{tail,pc}})=f(T_{n,M})\), \(\varphi(\theta_{n,M}^{\mathrm{hyb,pc}})=f(Z_{n,M})\), \(T_{n,M}=\sum_{m=M+1}^\infty\gamma_ms_{n,m}\), and \(Z_{n,M}\sim N_p(0,r_{n,M}\widehat I)\). Moreover, \(\|D^2f\|_\infty\le4\lambda_K^{-2}\|D^2\varphi\|_\infty\) and \(\|D^3f\|_\infty\le8\lambda_K^{-3}\|D^3\varphi\|_\infty\).

Let \(G_M\sim N_p(0,r_{n,M}H_M)\).
By the triangle inequality,
\[
\begin{aligned}
    &
    \left|
    \mathbb E\{\varphi(\theta_{n,M}^{\mathrm{tail,pc}})\mid \mathcal F_{n,M}\}
    -
    \mathbb E\{\varphi(\theta_{n,M}^{\mathrm{hyb,pc}})\mid \mathcal F_{n,M}\}
    \right|  \\
    &\qquad
    \le
    \left|
    \mathbb E\{f(T_{n,M})\mid \mathcal F_{n,M}\}
    -
    \mathbb E\{f(G_M)\mid \mathcal F_{n,M}\}
    \right|  \\
    &\qquad\quad
    +
    \left|
    \mathbb E\{f(G_M)\mid \mathcal F_{n,M}\}
    -
    \mathbb E\{f(Z_{n,M})\mid \mathcal F_{n,M}\}
    \right|.
\end{aligned}
\]

The first term is controlled by Lemma \ref{lem:tail_gaussian}:
\[
\begin{aligned}
    \left|
    \mathbb E\{f(T_{n,M})\mid \mathcal F_{n,M}\}
    -
    \mathbb E\{f(G_M)\mid \mathcal F_{n,M}\}
    \right|
    \le
    C_p
    \left\{
        \|D^2 f\|_\infty L_KS_K r_{n,M}^{3/2}
        +
        \|D^3 f\|_\infty S_K^3 A_{n,M}
    \right\}.
\end{aligned}
\]
Substituting the derivative bounds for \(f\) gives \(C_p\{\lambda_K^{-2}\|D^2\varphi\|_\infty L_KS_Kr_{n,M}^{3/2}+\lambda_K^{-3}\|D^3\varphi\|_\infty S_K^3A_{n,M}\}\).

For the second term, use Lemma \ref{lem:gaussian_cov_perturb}. The covariance matrices are \(r_{n,M}H_M\) and \(r_{n,M}\widehat I\), so \(\|r_{n,M}H_M-r_{n,M}\widehat I\|_{\mathrm{op}}\le r_{n,M}\varepsilon_{n,M}\).
Therefore,
\[
\begin{aligned}
    \left|
    \mathbb E\{f(G_M)\mid \mathcal F_{n,M}\}
    -
    \mathbb E\{f(Z_{n,M})\mid \mathcal F_{n,M}\}
    \right|
    &\le
    C_p
    \|D^2 f\|_\infty
    r_{n,M}\varepsilon_{n,M}  \\
    &\le
    C_p
    \frac{\|D^2\varphi\|_\infty}{\lambda_K^2}
    r_{n,M}\varepsilon_{n,M}.
\end{aligned}
\]
Combining the two bounds yields
\[
\begin{aligned}
    &
    \left|
    \mathbb E\{\varphi(\theta_{n,M}^{\mathrm{tail,pc}})\mid \mathcal F_{n,M}\}
    -
    \mathbb E\{\varphi(\theta_{n,M}^{\mathrm{hyb,pc}})\mid \mathcal F_{n,M}\}
    \right|   \\
    &\qquad\le
    C_p
    \left[
    \frac{\|D^2\varphi\|_\infty}{\lambda_K^2}
    \left\{
        L_KS_K r_{n,M}^{3/2}
        +
        r_{n,M}\varepsilon_{n,M}
    \right\}
    +
    \frac{\|D^3\varphi\|_\infty}{\lambda_K^3}
    S_K^3 A_{n,M}
    \right],
\end{aligned}
\]
which is the claimed bound.
\end{proof}

\subsection{Proof of Theorem \ref{thm:bvm}}
\label{app:proof_thm3}

\begin{proof}
We prove the stated smooth-test-function bound. The weak convergence statement follows by standard approximation arguments.

Let \(M=M_n\), \(w_n=w_{n,M_n}\), \(r_n=r_{n,M_n}\), and
\(\widehat I=\widehat I_n\). The hybrid draw and its scaled decomposition are
\[
\begin{aligned}
    \theta_n^{\mathrm{hyb,pc}}&=\theta_{n,0}+\widehat I^{-1}\{D_{n,M}+Z_{n,M}\},
    & Z_{n,M}\mid(\mathcal F_{n,M}\vee\mathcal G_n)&\sim\mathcal N_p(0,r_n\widehat I),\\
    Y_n:=\sqrt n\{\theta_n^{\mathrm{hyb,pc}}-\theta_{n,0}\}&=R_n+G_n,
    & (R_n,G_n)&=\sqrt n\,\widehat I^{-1}(D_{n,M},Z_{n,M}),\\
    Z^*&\sim\mathcal N_p\{0,I(\theta^*)^{-1}\}.&&
\end{aligned}
\]

By Assumption \ref{assum:mle_bvm}, with probability tending to one,
\(\theta_{n,0}\) and the explicit predictive path remain in the compact neighborhood
\(K\), and
\[
    \lambda_{\min}(\widehat I)\ge\tfrac12\lambda_{\min}\{I(\theta^*)\},
    \qquad \|\widehat I^{-1}\|_{\mathrm{op}}\le C
\]
for a finite constant \(C\). The bounds below are established on this event; the
complement has probability tending to zero.

First consider the explicitly simulated part \(R_n\). By martingale orthogonality and boundedness of the score on \(K\),
\[
\begin{aligned}
    \mathbb E
    \left(
        \|D_{n,M}\|^2
        \mid \mathcal G_n
    \right)
    &=
    \mathbb E
    \left[
        \left\|
            \sum_{m=1}^{M}
            \gamma_{m}s_{n,m}
        \right\|^2
        \mid \mathcal G_n
    \right]
\le
    C
    \sum_{m=1}^{M}\gamma_{m}^2
    =
    Cw_n.
\end{aligned}
\]
Therefore,
\[
\begin{aligned}
    \mathbb E
    \left(
        \|R_n\|
        \mid \mathcal G_n
    \right)
    &\le
    \sqrt n\,
    \|\widehat I^{-1}\|_{\mathrm{op}}\,
    \mathbb E
    \left(
        \|D_{n,M}\|
        \mid \mathcal G_n
    \right)
    \le
    C\sqrt{n w_n}.
\end{aligned}
\]
Since \(w_n=\sum_{m=1}^{M_n}(n+m)^{-2}\le M_n/n^2\), we obtain \(\mathbb E(\|R_n\|\mid\mathcal G_n)\le C\sqrt{M_n/n}\).
Thus, for every \(\psi\in C_b^2(\mathbb R^p)\),
\[
\begin{aligned}
    &
    \left|
    \mathbb E
    \left\{
        \psi(R_n+G_n)
        \mid \mathcal G_n
    \right\}
    -
    \mathbb E
    \left\{
        \psi(G_n)
        \mid \mathcal G_n
    \right\}
    \right|
    \le
    \|D\psi\|_\infty\,
    \mathbb E
    \left(
        \|R_n\|
        \mid \mathcal G_n
    \right)
    =
    \mathcal O_p
    \left(
        \sqrt{\frac{M_n}{n}}
    \right).
\end{aligned}
\]

It remains to compare \(G_n\) with \(Z^*\). Conditional on \(\mathcal G_n\), \(G_n\sim\mathcal N_p(0,\Sigma_n)\), where \(\Sigma_n=nr_n\widehat I^{-1}\).
By the Gaussian covariance perturbation lemma,
\[
\begin{aligned}
    &
    \left|
    \mathbb E
    \left\{
        \psi(G_n)
        \mid \mathcal G_n
    \right\}
    -
    \mathbb E\{\psi(Z^*)\}
    \right|
    \le
    C_p\|D^2\psi\|_\infty
    \left\|
        \Sigma_n-I(\theta^*)^{-1}
    \right\|_{\mathrm{op}}.
\end{aligned}
\]
We now bound this covariance difference. Since \(w_{n,\infty}=\sum_{m=1}^{\infty}(n+m)^{-2}=n^{-1}+\mathcal O(n^{-2})\), \(r_n=w_{n,\infty}-w_n\), and \(nw_n\le M_n/n\), we have \(nr_n=1+\mathcal O(n^{-1})-nw_n\) and hence \(|nr_n-1|=\mathcal O(n^{-1}+M_n/n)\).
It follows that
\[
\begin{aligned}
    \left\|
        \Sigma_n-I(\theta^*)^{-1}
    \right\|_{\mathrm{op}}
    &=
    \left\|
        nr_n\widehat I^{-1}
        -
        I(\theta^*)^{-1}
    \right\|_{\mathrm{op}}
    \\
    &\le
    |nr_n-1|\,
    \|\widehat I^{-1}\|_{\mathrm{op}}
    +
    \left\|
        \widehat I^{-1}
        -
        I(\theta^*)^{-1}
    \right\|_{\mathrm{op}}.
\end{aligned}
\]
The first term is \(\mathcal O_p(n^{-1}+M_n/n)\). For the second term, use \(A^{-1}-B^{-1}=A^{-1}(B-A)B^{-1}\)
with \(A=\widehat I\) and \(B=I(\theta^*)\). On the high-probability event where both matrices are uniformly nonsingular,
\[
\begin{aligned}
    \left\|
        \widehat I^{-1}
        -
        I(\theta^*)^{-1}
    \right\|_{\mathrm{op}}
    &\le
    C
    \left\|
        \widehat I-I(\theta^*)
    \right\|_{\mathrm{op}}
\le
    C
    \left[
        \left\|
            \widehat I-I(\theta_{n,0})
        \right\|_{\mathrm{op}}
        +
        \left\|
            I(\theta_{n,0})-I(\theta^*)
        \right\|_{\mathrm{op}}
    \right].
\end{aligned}
\]
The first term is \(\Delta_{I,n}\). The second term is \(\mathcal O_p(n^{-1/2})\), by local Lipschitz continuity of \(I(\cdot)\) and the asymptotic normality of \(\theta_{n,0}\). Hence \(\|\widehat I^{-1}-I(\theta^*)^{-1}\|_{\mathrm{op}}=\mathcal O_p(\Delta_{I,n}+n^{-1/2})\), and combining the preceding bounds gives \(\|\Sigma_n-I(\theta^*)^{-1}\|_{\mathrm{op}}=\mathcal O_p(\Delta_{I,n}+n^{-1/2}+M_n/n)\). Since \(M_n/n\to0\), \(M_n/n\le\sqrt{M_n/n}\) for all sufficiently large \(n\). Therefore,
\[
\begin{aligned}
    &
    \left|
    \mathbb E
    \left\{
        \psi(G_n)
        \mid \mathcal G_n
    \right\}
    -
    \mathbb E\{\psi(Z^*)\}
    \right|
    =
    \mathcal O_p
    \left(
        \Delta_{I,n}
        +
        n^{-1/2}
        +
        \sqrt{\frac{M_n}{n}}
    \right).
\end{aligned}
\]
Combining this bound with the preceding bound for \(R_n\) yields
\[
\begin{aligned}
    &
    \left|
    \mathbb E
    \left[
        \psi
        \left\{
            \sqrt n
            (\theta_n^{\mathrm{hyb,pc}}-\theta_{n,0})
        \right\}
        \mid \mathcal G_n
    \right]
    -
    \mathbb E\{\psi(Z^*)\}
    \right|
    =
    \mathcal O_p
    \left(
        \Delta_{I,n}
        +
        \sqrt{\frac{M_n}{n}}
        +
        n^{-1/2}
    \right).
\end{aligned}
\]
Since \(\Delta_{I,n}=o_p(1)\) and \(M_n/n\to0\), the right-hand side converges to zero in probability. This proves the conditional convergence given \(\mathcal G_n\).

If \(\widehat I_n\) is data-measurable, then \(\mathcal G_n=\sigma(X_{1:n})\). If \(\widehat I_n\) is generated from an independent calibration ensemble, the conditional-on-data version follows from the tower property: the conditional discrepancy given \(X_{1:n}\) is bounded above by the conditional expectation, given \(X_{1:n}\), of the discrepancy just shown to converge to zero in probability. Boundedness of \(\psi\) then gives the result. This completes the proof.
\end{proof}

\subsection{Proof of Corollary \ref{cor:vmf_bound}}
\label{app:proof_vmf_local}

\begin{proof}
For the vMF model in natural parameter \(c\), the score is \(s(x,c)=x-A_p(\|c\|)c/\|c\|\), with the usual convention at \(c=0\). Since \(\|x\|=1\) and \(0\le A_p(\kappa)<1\), \(\|s(x,c)\|\le1+A_p(\|c\|)\le2\).
Thus the score is globally bounded.

For \(c\ne0\), write \(\kappa=\|c\|\) and \(u=c/\kappa\). The Fisher information is \(I(c)=\{A_p(\kappa)/\kappa\}(I_p-uu^\top)+A_p'(\kappa)uu^\top\), with eigenvalues \(A_p(\kappa)/\kappa\) on the tangent space and \(A_p'(\kappa)\) radially. At \(\kappa=0\), the standard expansion \(A_p(\kappa)=\kappa/p+\mathcal O(\kappa^3)\) implies \(A_p(\kappa)/\kappa\to1/p\) and \(A_p'(\kappa)\to1/p\).
Therefore \(I(c)\to p^{-1}I_p\) as \(c\to0\).

Fix \(R<\infty\) and let \(K_R=\{c\in\mathbb R^p:\|c\|\le R\}\). The functions \(A_p(\kappa)/\kappa\) and \(A_p'(\kappa)\), interpreted continuously at \(\kappa=0\), are continuous and strictly positive on \([0,R]\); hence \(\lambda_R:=\inf_{0\le\kappa\le R}\min\{A_p(\kappa)/\kappa,A_p'(\kappa)\}>0\).
This proves uniform nonsingularity on \(K_R\).

It remains to note Lipschitz continuity. Write \(I(c)=a(\kappa)I_p+\{b(\kappa)-a(\kappa)\}uu^\top\), where \(a(\kappa)=A_p(\kappa)/\kappa\) and \(b(\kappa)=A_p'(\kappa)\). The expansions \(a(\kappa)=p^{-1}+\mathcal O(\kappa^2)\) and \(b(\kappa)=p^{-1}+\mathcal O(\kappa^2)\) show that \(b(\kappa)-a(\kappa)=\mathcal O(\kappa^2)\) as \(\kappa\downarrow0\).
Thus the apparent singularity in \(uu^\top=cc^\top/\|c\|^2\) is removable, and \(I(c)\) extends continuously, indeed locally Lipschitzly, to \(c=0\). Away from zero, \(I(c)\) is smooth. Since \(K_R\) is compact, \(I(c)\) is Lipschitz on \(K_R\). Therefore Assumption \ref{assum:hybrid_regularity} holds on every compact \(K_R\), completing the proof.
\end{proof}

\subsection{MCMC benchmarks}
\label{app:mcmc}

\paragraph{Circular model.}
For angular observations $\omega_i$ with $X_i=(\cos\omega_i,\sin\omega_i)^\top$, the von Mises density is $p(\omega\mid\phi,\kappa)\propto I_0(\kappa)^{-1}\exp\{\kappa\cos(\omega-\phi)\}$. Under the conjugate prior $\pi(\phi,\kappa)\propto I_0(\kappa)^{-c_0}\exp\{\kappa\mathcal R_0\cos(\phi-\phi_0)\}$ with $c_0,\mathcal R_0\ge0$ and $\phi_0\in[0,2\pi)$, the joint posterior is
\[
    \pi(\phi,\kappa\mid X_{1:n})
    \propto
    I_0(\kappa)^{-(c_0+n)}\exp\{\kappa\mathcal R_n\cos(\phi-\bar\omega_n)\},
\]
where $\mathcal R_n=\|\mathcal R_0\mu_{\phi_0}+n\bar X_n\|$ and $\bar\omega_n$ are the resultant length and mean direction of the combined prior and data, with $\mu_{\phi_0}=(\cos\phi_0,\sin\phi_0)^\top$. 

Posterior sampling is conducted via a two-step Gibbs sampler. Given the state $(\phi, \kappa)$:
\begin{enumerate}
    \item \textbf{Update $\phi$:} Draw $\phi \mid \kappa, X_{1:n}$ from a von Mises distribution with mean direction $\bar{\omega}_n$ and concentration $\kappa \mathcal{R}_n$.
    \item \textbf{Update $\kappa$:} Draw $\kappa \mid \phi, X_{1:n}$ from the full conditional $\pi(\kappa \mid \phi, X_{1:n}) \propto I_0(\kappa)^{-\eta} \exp\{-\eta \beta_0 \kappa\}$, where $\eta = c_0 + n$ and $\beta_0 = -\frac{\mathcal{R}_n}{\eta} \cos(\phi - \bar{\omega}_n)$, utilizing the algorithm of \cite{forbes2015fast}.
\end{enumerate}

 We set $c_0=\mathcal R_0=\phi_0=0$ throughout.

\paragraph{Spherical model.}
The circular construction above uses the closed form of the von Mises normalizing constant and does not extend to $p=3$. We therefore follow \citet{straub2017nonparametric}, whose joint prior
\[
    p(\mu,\kappa\mid\mu_0,a,b)
    \;\propto\;
    \Bigl(\frac{\kappa}{\sinh\kappa}\Bigr)^{a}
    \exp\bigl(b\,\kappa\,\mu^\top\mu_0\bigr)
\]
is specified only up to proportionality and is proper for $0<b<a$. Given observations $X_{1:n}$, the posterior belongs to the same family, with parameters
\[
    a_N = a+n,
    \qquad
    b_N\mu_N = b\,\mu_0 + \sum\nolimits_{i=1}^{n} X_i .
\]
The full conditional for $\mu$ is von Mises--Fisher with mean direction $\mu_N$ and concentration $b_N\kappa$, and is sampled directly. The full conditional for $\kappa$ is unimodal but has no tractable inverse cumulative distribution function, and is sampled by the slice sampler of that reference, with the slice boundaries located by Newton iteration.

As in the circular case, we use the improper limiting prior obtained as $a,b\to0^{+}$, under which $\mu_0$ does not enter the posterior and the updates reduce to
\[
    a_N = n,
    \qquad
    b_N = n R_n,
    \qquad
    \mu_N = \bar X_n / R_n ,
\]
with $\bar X_n$ and $R_n=\|\bar X_n\|$ as in the initialization. Whenever $0<R_n<1$ these satisfy $0<b_N<a_N$, so the posterior lies in the conjugate family with valid parameters and provides a proper reference posterior.

\subsection{More results for Section~\ref{sec:single_sim}}
\label{app:single_sim_n=100}


Section~\ref{sec:single_sim} reports the two extreme sample sizes, $n=10$ and $n=500$. The intermediate case $n=100$ is shown here for completeness. At this sample size the truncation factor $\{w_{n,M}/w_{n,\infty}\}^{1/2}$ equals $0.953$ at simulation depth $M=1000$, so MPS-R retains about $95\%$ of the posterior scale and the variance deficit is an order of magnitude smaller than at $n=500$.

\begin{figure}[H]
    \centering
    \includegraphics[width=0.9\textwidth]{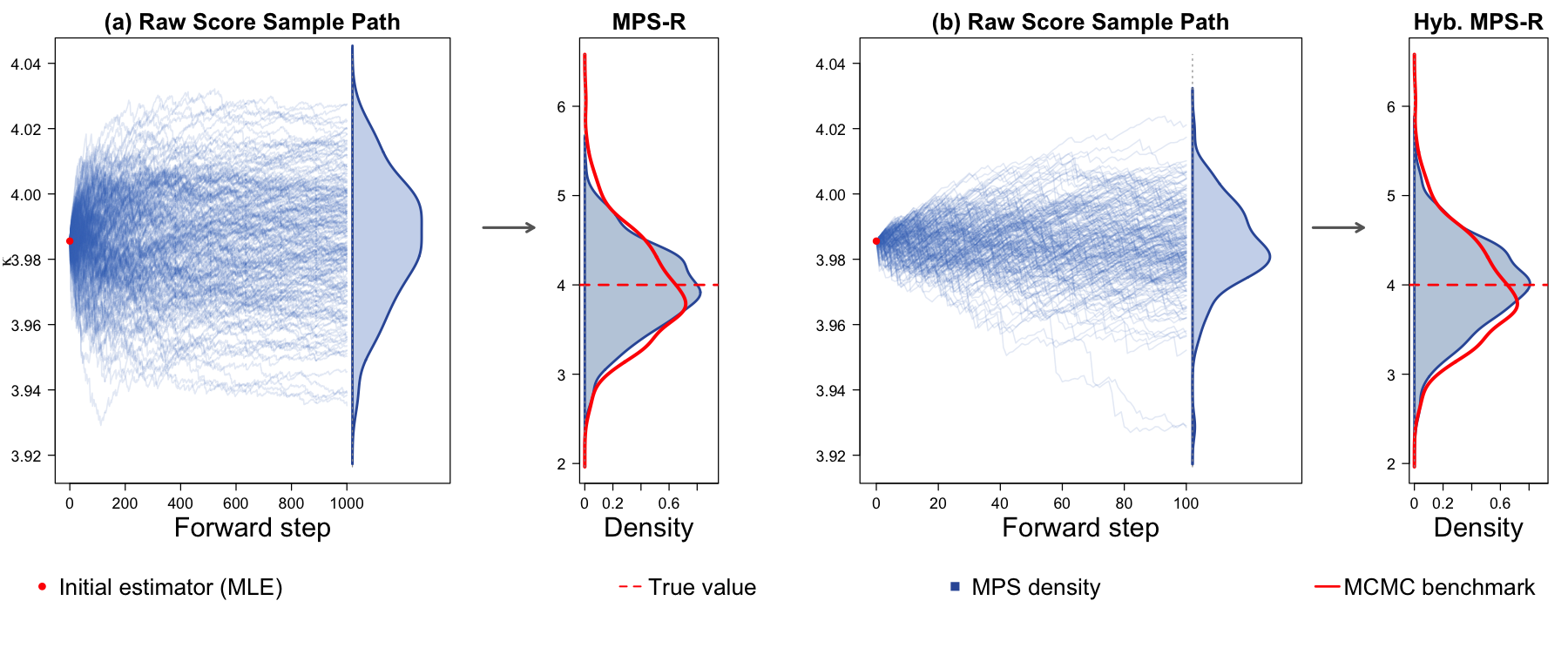}
    \captionsetup{width=0.90\linewidth, font=small}
    \caption{Marginal posterior density of $\kappa$: (a) $n=100$, MPS-R, $M=1000$, (b) $n=100$, Hybrid MPS-R, $M=100$.}
    \label{fig:MPS_n100}
\end{figure}

The marginal posterior of $\kappa$ is close to Gaussian at this sample size, and the right skewness present at $n=10$ has largely disappeared. MPS-R is not visibly under-dispersed in panel (a) of Figure~\ref{fig:MPS_n100}, in contrast to the $n=500$ case in Figure~\ref{fig:MPS_n500}, and Hybrid MPS-R reproduces the same density in panel (b) from a path of length $M=100$. The residual difference between the two panels is consistent with the interval lengths reported in Table~\ref{tab:sim_results_p2} at $n=100$, where the ratio of the MPS-R to the Hybrid MPS-R lengths is $0.952$ for $\kappa$ and $0.958$ for $\phi$. Both densities remain close to the MCMC benchmark, whose intervals for $\kappa$ are somewhat wider at this sample size, as discussed in Section~\ref{sec:coverage}.


\end{document}